     \definecolor{linkred}{rgb}{0.6,0,0}
     \definecolor{linkblue}{rgb}{0,0,0.6}
\theoremstyle{plain}
     \newtheorem{theorem}{Theorem}
     \newtheorem{lemma}{Lemma}
     \newtheorem{proposition}{Proposition}[section]
     \newtheorem{conjecture}{Conjecture}
\theoremstyle{definition}
     \newtheorem{example}[proposition]{Example}
     \newtheorem{definition}[proposition]{Definition}
     \newtheorem{remark}[proposition]{Remark}
\newcommand{\N}{\overline{N}}
\newcommand{\res}{\mathop{\mathrm{Res}}}
\newcommand{\bc}{\mathbb{C}}
\newcommand{\bp}{\mathbb{P}}
\newcommand{\bq}{\mathbb{Q}}
\newcommand{\br}{\mathbb{R}}
\newcommand{\bz}{\mathbb{Z}}
\newcommand{\cb}{\mathcal{B}}
\newcommand{\cd}{\mathcal{D}}
\newcommand{\cf}{\mathcal{F}}
\newcommand{\cl}{\mathcal{L}}
\newcommand{\modm}{\mathcal{M}}
\newcommand {\dd}{\mathrm{d}}
\newcommand {\h}{\hbar}
\newcommand {\x}{\widehat{x}}
\newcommand {\uh}{\widehat{u}}
\newcommand {\vh}{\widehat{v}}
\newcommand {\Pw}{\widehat{P}}
\newcommand {\xx}{{\bf{x}}}
\newcommand {\pp}{\bm{p}}
\newcommand {\y}{\widehat{y}}
\newcommand {\opsi}{\overline{\psi}}
\newcommand{\mmu}{\boldsymbol{\mu}}
\newcommand{\rep}{\mathcal{R}}
\newcommand{\tr}{{\rm tr}\hspace{.2mm}}
\newcommand{\Mbar}{{\overline{\mathcal{M}}}}
\DeclareRobustCommand{\stirling}{\genfrac{[}{]}{0pt}{}}
\renewcommand{\arraystretch}{1.25}
\numberwithin{equation}{section}
\begin{document}

\title{Quantum curves and topological recursion}
\author{Paul Norbury}
\address{Department of Mathematics and Statistics, University of Melbourne, Australia 3010}
\email{\href{mailto:pnorbury@ms.unimelb.edu.au}{pnorbury@ms.unimelb.edu.au}}
\thanks{}
\subjclass[2010]{14N10; 05A15; 81S10}
\date{\today}

\begin{abstract}
This is a survey article describing the relationship between quantum curves and topological recursion.  A quantum curve is a Schr\"odinger operator-like noncommutative analogue of a plane curve which encodes (quantum) enumerative invariants in a new and interesting way.  The Schr\"odinger operator annihilates a wave function which can be constructed using the WKB method, and conjecturally constructed in a rather different way via topological recursion.
\end{abstract}

\maketitle

\setlength{\parskip}{0pt}
\vspace{-.8cm}
\tableofcontents
\setlength{\parskip}{6pt}

\vspace{-.8cm}
\section{Introduction}  \label{sec:intro}

A {\em quantum curve} of a plane curve $C\hspace{-1mm}=\hspace{-1mm}\{\hspace{-.5mm}(x,y)\hspace{-.5mm}\in\hspace{-.5mm}\bc^2\hspace{-.8mm}\mid\hspace{-.8mm} P(x,y)\hspace{-.8mm}=\hspace{-.8mm}0\}$ is a Schr\"odinger-type linear differential equation 
\begin{equation}   \label{waveq}
\widehat{P}(\x, \y) \, \psi(p,\h)=0
\end{equation}
where $p\in C$, $\h$ is a formal parameter, and $\widehat{P}(\x, \y)$ is a differential operator-valued non-commutative quantisation of the plane curve with $\x=x\cdot$ and $\y=\h\frac{d}{dx}$, so in particular
\begin{equation} \label{commxy}
[\x,\y]=-\h.
\end{equation}
For example, a quantisation of $P(x,y)=y^2-x$ is the Schr\"odinger operator $\widehat{P}(\x, \y)=\y^2-\x=\h^2\frac{d^2}{dx^2}-x$.  The quantum curve can also be a {\em difference equation} if instead we consider $P(e^x,e^y)=0$---see Section~\ref{sec:use}.  

The equation  \eqref{waveq} is understood via the WKB method. In other words we require $\psi(p,\h)$ to be of the form 
\begin{equation}   \label{wavef}
\psi(p,\h)=\exp(\h^{-1}S_0(p)+S_1(p)+\h S_2(p)+\h^2 S_3(p)+...)
\end{equation}
and the $S_k(p)$ are calculated recursively via \eqref{waveq}.  A simple consequence of \eqref{waveq} is that $S_k(p)$ are meromorphic functions on $C$, where $S_0(p)=\int^p ydx$ may be multi-valued.  The operator $\frac{d}{dx}$ acts on meromorphic functions via composition of the exterior derivative followed by division by $dx$, 
and coincides with usual differentiation on an analytic expansion of a meromorphic function with respect to a local coordinate defined by $x$.

{\bf Fundamental question:}  Can we define $S_k(p)$ directly from the plane curve without using the WKB approximation, and in particular produce a natural choice of $\widehat{P}(\x,\y)$?

A conjectural answer to this question is given by
\begin{equation}  \label{exactSk}
S_k(p)=\sum_{2g-1+n=k}\frac{1}{n!}\int^p\int^p...\int^p\omega^g_n(p_1,...,p_n)
\end{equation}
where $\omega^g_n(p_1,...,p_n)$ are multidifferentials for each $g\geq 0$, $n>0$ recursively defined on the curve $C$ via topological recursion which is described in Section~\ref{sec:EO}.  Curves of genus $g$ with $n$ labeled points give a convenient way to encode topological recursion, and often represent  an underlying geometric connection.  From this viewpoint, \eqref{exactSk} claims that $S_k(p)$ is related to all punctured curves of Euler characteristic $1-k$.  This conjecture is addressed by Gukov and Su\l kowski in \cite{GSuApo} together with the related issue of constructing $\widehat{P}(\x,\y)$ algorithmically from the wave function.

The path from the quantum curve to the plane curve is well-defined.  It is a little deeper than simple substitution $\x\mapsto x$ and $\y\mapsto y$ into $\widehat{P}(\x, \y)$, since we {\em deduce} that the differential  equation \eqref{waveq} is satisfied only on the plane curve $P(x,y)=0$.  This is achieved via the semi-classical limit $\h\to 0$, where the differential operator $\widehat{P}(\x,\y)$ reduces to a multiplication operator that vanishes precisely on the plane curve.  The action of $\h\frac{d}{dx}$ on
$$\psi_0(p,\h)=\exp(\hbar^{-1}\int^p ydx)$$
is multiplication by $y$, so 
$$ \widehat{P}(\x, \y) \, \psi(p,\h)=\Big[P(x,y)+O(\h)\Big]\psi(p,\h)
$$
and in the $\h\to 0$ limit $\y=\h\frac{d}{dx}$ in $\widehat{P}(\x, \y)$ is replaced by its symbol $y$.  Higher order corrections in $\h$ are required since $(\h\frac{d}{dx})^2\mapsto y^2+O(\h)$ under its action on $\psi_0(x,\h)$.

On the other hand, constructing the quantum curve from the plane curve is not canonical. The main issues lie in the construction of the wave function $\psi(p,\h)$ and the ambiguity in ordering the non-commuting operators $\x$ and $\y$ in $\widehat{P}$.  The conjectural formula \eqref{exactSk} is one attempt to remedy this.  Such a wave function is enough to reconstruct the operator $\widehat{P}(\x,\y)$.  Any $\widehat{P}(\x,\y)$ can be expressed as:
\begin{equation} \label{normord}
\widehat{P}(\x,\y)=P(\x,\y)+\h P_1(\x,\y)+\h^2 P_2(\x,\y)+...
\end{equation}
where each $P_k(\x,\y)$ is a normal ordered operator valued polynomial---all $\y$ terms in a monomial are placed to the right, so $x^my^n$ is replaced by $\x^m\y^n$---and has no explicit $\h$ dependence.  Then the $P_k(\x,\y)$ can be reconstructed recursively from the wave function.

The differential operator $\widehat{P}(\x, \y)$ generates a principal ideal in the algebra $\cd$ of differential operators which act on $\bc[x]$.  The quotient $\cd/\langle\widehat{P}\rangle$ of the algebra $\cd$ by the principal ideal $\langle\widehat{P}\rangle=\cd\cdot \widehat{P}$ is a $\cd$-module which gives a way to study $\widehat{P}(\x, \y)$ intrinsically.  See \cite{HolTop} for a detailed description of this.  The wave function $\psi(p, \h)$ can be retrieved via the $\cd$-module homomorphism it defines:
$$\cd/\langle\widehat{P}\rangle\to\bc[[x,x^{-1},\h,\h^{-1}]].$$


\subsection{Model enumerative problem.}
Plane curves and quantum curves naturally arise out of various enumerative problems.  A model problem is the enumeration of moments of a given probability measure.   Given a measure $\rho$ on $\br$ that is well-behaved, say it is bounded with compact support $K\subset \br$, its Stieltjes transform is analytic in $\overline{\bc}-K$,
$$\hat{\rho}(x)=\int_{\br}\frac{\rho(t)dt}{x-t}=\sum_{n\geq 0}\frac{\langle t^n\rangle}{x^{n+1}},\quad \langle t^n\rangle=\int_{\br}t^n\rho(t)dt$$
where the sum is an analytic expansion at $x=\infty$, with coefficients giving the moments of $\rho$.  The function $\hat{\rho}(x)$ extends to a Riemann surface which is a cover of the $x$-plane.  The Riemann surface is equipped with two functions $x$ and $y=\hat{\rho}(x)$ and hence naturally maps to $\bc^2$.  
A toy example of this setup is as follows.

\begin{example}  \label{ex1}
Consider the discrete measure 
$$\rho(t)=\sum_{i=1}^N\delta_{\lambda_i}$$
with Stieltjes transform
$$\hat{\rho}(x)=\sum_{i=1}^N\frac{1}{x-\lambda_i}=\tr(x-M)^{-1}$$ 
for $M$ a matrix with eigenvalues $\{\lambda_1,...,\lambda_N\}$ conveniently chosen to have resolvent $\hat{\rho}(x)$.  The Stieltjes transform is a holomorphic function on a plane curve
which we call its {\em spectral curve}
\begin{equation} \label{spec1}
\left\{y-\sum_{i=1}^N\frac{1}{x-\lambda_i}=0\right\}.
\end{equation}

The function
$$\psi(x)=\det(x-M)=\displaystyle\prod_{i=1}^N(x-\lambda_i)=\exp\int^x ydx$$
satisfies the first order differential equation 
\begin{equation} \label{qc1}
\left(\frac{d}{dx}-\sum_{i=1}^N\frac{1}{x-\lambda_i}\right)\psi(x)=0
\end{equation}
which is the {\em quantum curve}, a non-commutative analogue of the spectral curve since $(x,y)$ is replaced in the equation of the curve \eqref{spec1} by operators $(\x=x\cdot,\y=\frac{d}{dx})$ to produce the differential operator in \eqref{qc1}.

If we introduce $\h$ into the wave function via
$\psi(x)=\exp\h^{-1}\int^x ydx$ and $\y=\h\frac{d}{dx}$, then it satisfies the quantum curve equation $(\hat{y}-\sum_{i=1}^N\frac{1}{x-\lambda_i})\psi(x)=0$ exactly but the perturbative parameter $\h$ is not needed here.
\end{example}
\begin{remark}
The example above is a special case of the more general construction of normal ordered first order quantum curves associated to plane curves linear in $y$, i.e. $P(x,y)=p(x)+q(x)y$.  In these cases the first order approximation of the quantum curve gives the entire quantum curve.
\end{remark}
\begin{remark}
The elementary relation of this example to the spectrum of a matrix anticipates nicely the connection with matrix models which are mentioned briefly in Section~\ref{sec:ex}.  Its relation to the spectrum of a matrix means it also has an elementary relation to symmetric polynomials.  The Stieltjes transform and wave function are generating functions for power sum symmetric polynomials $p_k$, respectively elementary symmetric polynomials $\sigma_k$:
$$\hat{\rho}(x)=\sum_{k=0}^{\infty}\frac{p_k(\lambda_1,\dots,\lambda_N)}{x^{k+1}},\quad \psi(x)=\sum_{k=0}^{\infty}\frac{\sigma_k(\lambda_1,\dots,\lambda_N)}{x^k}
$$
where the second sum is of course finite.  The quantum curve \eqref{qc1} gives
$\frac{d}{dx}\psi(x)=\hat{\rho}(x)\psi(x)$ which is exactly Newton's identities relating $p_k$ and $\sigma_k$.
\end{remark} 

\subsection{WKB method}
The quantum curve defines a triangular system in $\frac{d}{dx}S_k(p)$ known as the WKB method.  The function $\frac{d}{dx}S_k(p)$ first appears in the system as 
$$ \h^k \frac{\partial P}{\partial y}(x,y)\frac{d}{dx}S_k(p)+...
$$
so in particular it is a meromorphic function on the curve $P(x,y)=0$ with poles at the zeros of $dx$.
Here we present an example in order to go through the WKB method explicitly.

\begin{example}  \label{ex2}
Consider the measure
$$\rho(t)=\frac{1}{2\pi}\sqrt{4-t^2}\cdot\chi_{[-2,2]}$$
which can arise as the limit $ \lim_{N\to\infty}\frac{1}{N}\sum_{i=1}^N\delta_{\lambda_i}$ of a normalised version of Example~\ref{ex1}. 

Its Stieltjes transform 
extends to a meromorphic function on a rational curve realised as a double cover of $\overline{\bc}$ (= the $x-$plane) branched over $x=\pm 2$:
$$\hat{\rho}(x)=\frac{1}{2\pi}\int^2_{-2}\frac{\sqrt{4-t^2}}{x-t}dt=\sum_{n\geq 0}\frac{C_n}{x^{2n+1}}=y,\quad x=y+\frac{1}{y}$$
where the odd moments vanish and $C_n=\frac{1}{n+1}\binom{2n}{n}$ is the $n$th Catalan number. The series in $x$ is a local statement---it is an analytic expansion of the global meromorphic function $y=\hat{\rho}(x)$ in the local coordinate $x$ on a branch of $x=\infty$. (It does not factor through $x:C\to\bc$.) Together $x$ and $y$ define a plane curve 
$$C=\{(x,y)\mid P(x,y)=y^2-xy+1=0\}.$$
The wave function
$\psi_0(p)=\exp\int^p ydx=\exp\int^yy(1-\frac{1}{y^2})dy=\frac{1}{y}\exp\frac{1}{2}y^2$
satisfies $\frac{d}{dx}\psi_0(p)=y\cdot\psi_0(p)$ hence on the spectral curve $C$
$$\left(\frac{d^2}{dx^2}-x\frac{d}{dx}+1\right)\psi_0(p)=(y^2-xy+1+\frac{d}{dx}y(p))\cdot\psi_0(p)=\frac{d}{dx}y(p)\cdot\psi_0(p)\neq 0$$
since $\frac{d}{dx}y(p)=\frac{y^2}{1-y^2}\neq 0$.  We see that when the linear operator is not first order its failure to be as simple as Example~\ref{ex1} is due to higher derivatives.

Introduce a perturbative variable $\h$ via $\y=\h\frac{d}{dx}$ and put $\psi_0(p)=\exp\h^{-1}\int^p ydx$ then
$$\left(\h^2\frac{d^2}{dx^2}-x\h\frac{d}{dx}+1\right)\psi_0(x)=\h \frac{y^2}{1-y^2}\psi_0(x)=O(\h)$$
which is zero up to order $\h$.   Following \eqref{wavef} we can remove all higher terms in $\h$ by replacing $\psi_0$ with
$\psi(p,\h)=\exp\Big(\h^{-1}\sum_{k\geq 0} \h^kS_k(p)\Big)$
where $\frac{d}{dx}S_0(p)=y$ and $S_k(p)$ are chosen so that $P(\x,\y)\psi(p,\h)=0$ for $\x=x\cdot$ and $\y=\h\frac{d}{dx}$ up to all orders of $\h$.
Concretely,
\begin{align*}
0=\psi^{-1}\left(\h^2\frac{d^2}{dx^2}-x\h\frac{d}{dx}+1\right)\psi=&\Big[\frac{d}{dx}S_0(p)^2-x\frac{d}{dx}S_0(p)+1\Big]\\
&+\h\Big[\left(\frac{d}{dx}\right)^2S_0(p)+(2\frac{d}{dx}S_0(p)-x)\frac{d}{dx}S_1(p)\Big]\\
&+\h^2\Big[\left(\frac{d}{dx}\right)^2S_1(p)+\frac{d}{dx}S_1(p)^2+(2\frac{d}{dx}S_0(p)-x)\frac{d}{dx}S_2(p)\Big]\\
&+\h^3\Big[\left(\frac{d}{dx}\right)^2S_2(p)+2\frac{d}{dx}S_1(p)\frac{d}{dx}S_2(p)+(2\frac{d}{dx}S_0(p)-x)\frac{d}{dx}S_3(p)\Big]\\
&+O(\h^4).
\end{align*}
The system is triangular so one can recursively solve for $\frac{d}{dx}S_k(p)$.  For example
\begin{equation}  \label{ex2s1}
\frac{d}{dx}S_1(p)=-\frac{\left(\frac{d}{dx}\right)^2S_0(p)}{2\frac{d}{dx}S_0(p)-x}=-\frac{y^3}{(y^2-1)^2}.
\end{equation}
Furthermore, $\frac{d}{dx}S_k(p)$ is rational in $\frac{d}{dx}S_m(p)$ for $m<k$ and $\frac{d}{dx}S_0(p)$ only appears in the denominator $2\frac{d}{dx}S_0(p)-x=\frac{y^2-1}{y}$.  Since $\frac{d}{dx}S_1(p)$ is a meromorphic function on $C$ with poles only at $y=\pm 1$ then $\frac{d}{dx}S_k(p)$ is a meromorphic function on $C$ with poles only at $y=\pm 1$ for $k>0$.  
\end{example}

For general curves $S_1(p)$ is discussed in \cite{GSuApo}.  For rational curves one obtains:
\begin{equation}  \label{exactS01x}
S_1(p)=-\frac{1}{2}\log\frac{dx}{dz}
\end{equation}
where $z$ is a global rational parameter on the curve $C$.  For example, when $x=z+\frac{1}{z}$, $S_1(z)=-\frac{1}{2}\log(1-\frac{1}{z^2})$ and hence $\frac{d}{dx}S_1(z)=\frac{1}{x'(z)}\frac{d}{dz}S_1(z)=\frac{-z}{(z^2-1)^2}$ which agrees with \eqref{ex2s1} for $y=1/z$.

More generally, for any spectral curve $C$ and local parameter $z$ on $C$, \eqref{exactS01x} gives a first approximation to $S_1(p)$ and involves further terms.


\subsection{Relations between quantum curves and topological recursion}

There are compelling reasons for conjecturing the relation between quantum curves and topological recursion given by \eqref{exactSk}. There is the close relationship between the disk and annulus invariants which are essentially the input data for topological recursion and the quantum curve.   Also, many examples have been verified, which we will discuss in Section~\ref{sec:ex}.  We discuss below two more properties of the $S_k$ shared by topological recursion and the quantum curve.  The first is the invariance of $S_k(p)$ under a class of isomorphisms between plane curves, where $S_k(p)$ is constructed using either the WKB method or topological recursion.  The second is the local behaviour near poles of topological recursion and the quantum curve which exploits the nice fact that the algebra of operators is in some sense commutative near poles.

\subsubsection{Invariance of $S_k$ under isomorphisms}
Consider the following isomorphism between plane curves 
\begin{equation}   \label{isom}
(x,y)\mapsto (x,y+g'(x))
\end{equation}
for any polynomial $g(x)$ where $g'(x)=\frac{d}{dx}g(x)$.  So their defining polynomials $P(x,y)=0$ and $Q(x,y)=0$ are related by $Q(x,y)=P(x,y-g'(x))$.  Now 
$$\widehat{P}(\x,\y)\psi(p,\h)=0\quad\Rightarrow\quad\widehat{Q}(\x,\y)\exp(\h^{-1}g(x))\psi(p,\h)=0
$$
for $P(x(p),y(p))=0$ and $Q(x(p),y(p)+g'(x(p)))=0$
where $\widehat{Q}(\x,\y)$ has to be defined carefully as follows: replace each appearance of $\y$ in $\widehat{P}(\x,\y)$ with the operator $\y-g'(x)$ and in particular do not normal order.  

The isomorphism \eqref{isom} preserves the underlying curve, not its embedding, together with the function $x$ defined on the curve.   The change in wave function for curves related by such an isomorphism only affects the $\h^{-1}$ term in the exponent of $\psi(p,\h)$ and all $S_k(p)$ for $p>0$ are unchanged under the isomorphism.  So we see that $S_k(p)$ for $k>0$ are in some sense intrinsic  to the underlying curve equipped with the functions $x$ and $y$.  

Moreover, the $\omega^g_n$, which are defined in Section~\ref{sec:EO} are unchanged under isomorphisms of type \eqref{isom}, lending weight to the conjecture  \eqref{exactSk}.

\subsubsection{Local factorisation.}  \label{sec:locfac}
A fundamental plane curve, known as the Airy curve, is $y^2-x=0$.  Its quantum curve does indeed satisfy \eqref{exactSk} \cite{ZhoInt}.  It gives a local model for any curve with $dx$ having simple zeros.  We will see below that the operator $\widehat{P}(\x,\y)$ has a factor of $(y-y_0)^2-\lambda(x-x_0)=0$ which annihilates part of the wave function corresponding to the wave function of the Airy curve.  This uses the fact that the algebra of operators is in some sense commutative near a zero of $dx$.

The WKB method shows that for $k>1$ the function $S_k(x)$ has a pole of order $3k-3$ at any zero $a$ of $dx$ and $S_1(x)$ has a logarithmic singularity there.  We define the largest order term of the $\h$ expansion $\h^{-1}S_0(p)+S_1(p)+\h S_2(p)+\h^2 S_3(p)+...$ at $a$ to be a $\h$ expansion with coefficients the largest order terms of each $S_k(p)$.  

Fix a zero $a$ of $dx$ and consider only the largest order terms at $\alpha$ in the exponent of $\psi$.  The action of the operator $\widehat{P}(\x,\y)$ on highest order terms is rather simple since all operators commute!  They behave like differential operators with constant coefficients.  For example, in a local coordinate $z$
$$\h\frac{d}{dx}x\frac{1}{(z-z(a))^m}=\h\frac{1}{(z-z(a))^m}+\h x\frac{d}{dx}{(z-z(a))^m}=\h x\frac{d}{dx}{(z-z(a))^m}+\text{ lower order terms}
$$
since $\frac{d}{dx}{(z-z(a))^{-m}}=O((z-z(a))^{-m-2}$.  This does not contradict $[x,\h\frac{d}{dx}]=-\h$. Multiplication by any function analytic in $x$, such as $\h$, acts next to $\frac{d}{dx}$ like zero on the highest order parts since an analytic function never increases the order of a pole whereas $\frac{d}{dx}$ always increases the order of a pole by 2.

The polynomial factorises as:
$$ 
P(\alpha,y)=\lambda\prod_{k=1}^n(y-\lambda_k(x))
$$
where $\lambda_k(x)$ are locally defined analytic functions.  Exactly two of the $\lambda_k(x)$ coincide when evaluated at a zero $a$ of $dx$---we may assume $\lambda_1(a)=\lambda_2(a)$.  With respect to a local coordinate $s$ near $a$ defined by $x=a+s^2$, we have $y=y(0)+y'(0)s+O(s^2)$.   Equivalently, $\lambda_1(x(s))=y(0)+y'(0)s+O(s^2)$ and $\lambda_2(x(s))=y(0)-y'(0)s+O(s^2)$.  Hence $(y-\lambda_1(x))(y-\lambda_2(x))=(y-y(0))^2-y'(0)^2s^2+O(s^3)$ and by commutativity near $a$ it can be brought forward and must annihilate the highest order part.  The conclusion is that the highest order part of the wave function is given by a rescaled Airy wave function.

The poles of the invariants $\omega^g_n$ occur at the zeros of $dx$ and the highest order part of $\omega^g_n$ near a pole is given by $y'(0)^{2-2g-n}\omega^{g\text{\ Airy}}_n$. Hence the the highest order part of the wave function using \eqref{exactSk} is $\psi^{\text{\ Airy}}(p,\frac{\h}{y'(0)})$ which agrees with the behaviour of the WKB method.

\subsection{Why are quantum curves useful?}  \label{sec:use} 

One application of quantum curves is to {\em predict} topological recursion.  The proofs are often easier for quantum curves than for topological recursion.  Proofs of both are often equivalent to recursions between enumerative invariants, and the former involves coarser, hence simpler, invariants.  A quantum curve assembles enumerative information in an Euler characteristic expansion using a single variable.  Whereas topological recursion produces several variable invariants with genus expansions.
Furthermore, the nonlinear behaviour of topological recursion which arises out of connectedness assumptions can be simpified to linear behaviour since a wave function satisfies a linear differential equation.  

Two examples of this predictive behaviour are as follows.  Quantum curves for a family of enumerative examples, so-called {\em hypermaps}, were proven in \cite{DMaQua} and there it was conjectured that topological recursion applied to the associated plane curves enumerated hypermaps.  The conjecture was known to be correct since it could be verified in low genus numerical calculations.  This conjecture was later proven in \cite{DOPSCom}.  Currently there is an outstanding conjecture regarding spin Hurwitz numbers.  It is known by numerical verification and the quantum curve was proven in \cite{MSSSpe}.

Another application of the quantum curve should be to enable one to drop technical assumptions on topological recursion.  Topological recursion does not apply to any plane curve.  It requires the zeros of $dx$ to be simple.  Since the quantum curve requires no such assumptions one would expect to be able to makes sense of topological recursion without the technical assumptions.  One such construction is given in \cite{BEyThi}.  This would be useful for curves where $x$ is defined via the quotient of a group action on the curve.

Before we describe the next application we will describe variations on the basic setup of plane curves.  This paper is mainly concerned with curves in $\bc^2$.  The $A$-polynomial describes curves 
$$C\subset\bc^*\times\bc^*$$ with coordinates $(x,y)=(e^u,e^v)$.  If we put $\uh=u\cdot$ and $\vh=\h\frac{d}{du}$ so $[\uh,\vh]=-\h$, then in $(x,y)$ coordinates we have $\x=e^{\uh}=x\cdot$ and $\y=e^{\vh}=e^{\h x\frac{d}{dx}}$ so that
$$\x\y=q^{-1}\y\x,\quad q=e^{\h}$$
and $\y f(x)=e^{\h x\frac{d}{dx}}f(x)=f(qx)$.  Topological recursion---defined in Section~\ref{sec:EO}---is sometimes modified depending on whether one works in $(u,v)$ or $(x,y)$ variables.

We also consider the case of $\bc\times\bc^*$ with coordinates $(u,v)=(x,e^y)$, for example the spectral curve of Gromov-Witten invarants of $\bp^1$ treated in Section~\ref{sec:ex}, where the commutator relations are $[\x,\y]=-\h\y$.

One of the most famous applications of the quantum curve is its relation to the volume conjecture in 3-manifold topology.  Conjecturally the behaviour of the quantum curve of the $A$-polynomial generalises the volume conjecture.  Given a knot $K\subset S^3$, its $A$-polynomial $A_K(m,\ell)=0$ is defined via its $SL(2,\bc)$ representation variety $\rep(G)=\{\rho\mid\rho:G\to SL(2,\bc)\}/\hspace{-1mm}\sim$.  The boundary $T^2=\partial(S^3-K)$ induces a restriction map 
$$\rep_G(S^3-K)\to\rep_G(T^2)\cong\bc^*\times\bc^*\ni \left(\begin{array}{cc}m&0\\0&m^{-1}\end{array}\right),\left(\begin{array}{cc}\ell&0\\0&\ell^{-1}\end{array}\right)$$
with image a Lagrangian curve defined by $A_K(m,\ell)=0$.  Here $\log\ell=v=S_0'(u)$ is known as the Neumann-Zagier potential \cite{NZaVol}.

The coloured Jones polynomial \cite{MMuCol} $J_N(K;q)\in\bz[q,q^{-1}]$ has asymptotics as $N\to\infty$ with $N\h$ fixed and $q=e^{-\h}$ given by
\begingroup
\renewcommand*{\arraystretch}{0.8}
$$\log J_N(K;e^{-\h})\hspace{-8mm}\small{\begin{array}[t]{c}\sim\\N\to\infty\\{\h}\to 0\\N{\h}=u+2\pi i\end{array}}\hspace{-8mm}\h^{-1}S_0(u)+\sum_{k=0}^{\infty}\h^kS_{k+1}(u).
$$
\endgroup
\begin{conjecture}
[Kashaev \cite{KasHyp}, Murakami-Murakami \cite{MMuCol}, Gukov \cite{GukThr}]
For the curve defined by $A_K(m,\ell)=0$, put $(m,\ell)=(e^u,e^v)$ and define $S_0(u)$ via \eqref{waveq} and \eqref{wavef}.  Then
$$S_0(u)= \text{\ volume of (incomplete) hyperbolic manifold\ }S^3-K.$$
\end{conjecture}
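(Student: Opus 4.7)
The plan is to reduce the conjecture to the Neumann--Zagier theorem \cite{NZaVol} identifying the primitive of $v\,du$ along the geometric component of the $A$-polynomial curve with the complex hyperbolic volume, using the WKB analysis of the quantum curve $\widehat{A_K}$ to produce $S_0$.

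First, I would apply the semiclassical argument from the introduction to $\widehat{A_K}(e^{\uh},e^{\vh})$ in the logarithmic variables $(u,v)=(\log m,\log \ell)$ with $[\uh,\vh]=-\h$.  Substituting the WKB ansatz \eqref{wavef} and extracting the $\h^0$ coefficient forces $A_K(e^u,e^{S_0'(u)})=0$, so $v(u):=S_0'(u)$ traces out a branch of the $A$-polynomial curve, and
\[
S_0(u)=\int^u v(u')\,du'.
\]
The branch singled out by the WKB wave function---equivalently, the branch picked out by an AJ-type annihilation of the coloured Jones polynomial $J_N(K;q)$ by $\widehat{A_K}$---is the geometric component containing the discrete faithful $SL(2,\bc)$-representation.

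Second, I would invoke the Neumann--Zagier potential.  On the geometric component near the complete structure, the holomorphic $1$-form $v\,du$ admits a primitive $\Phi(u)$, and the Schl\"afli formula applied to the family of hyperbolic cone manifolds $M_u$ obtained by deforming the complete structure with meridional parameter $u$ relates $\mathrm{Im}\,\Phi(u)$ to the volume $\mathrm{Vol}(M_u)$ up to Chern--Simons terms and an integration constant fixed at $u=0$, where $M_0=S^3\setminus K$.  Combining with step one identifies $S_0$ with $2\Phi$ (up to normalisation), and yields the claimed identification of $S_0(u)$ with the volume of $M_u$, recovering Kashaev's original statement at $u=0$.

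The hard part is twofold.  The first obstacle is arguing that the WKB wave function canonically produced from $\widehat{A_K}$ really selects the geometric component; the $A$-polynomial typically has several irreducible factors (abelian, geometric, occasionally others), and the choice of noncommutative lift $\widehat{A_K}$ and of WKB branch are intertwined through the AJ conjecture of Garoufalidis, which is itself open in general and known only for restricted families of knots.  The second obstacle is that the complete hyperbolic structure sits at a singular point of the $A$-polynomial curve---typically $(m,\ell)=(1,-1)$, i.e.\ $(u,v)=(0,i\pi)$---so the integration constant and the local form of $S_0$ near this branch point require a delicate WKB analysis more involved than the simple Airy model of Section~\ref{sec:locfac}.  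These two issues are why the conjecture remains open in general; rigorous proofs are available only for torus knots, the figure-eight knot, and selected two-bridge knots.
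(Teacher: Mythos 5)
The statement you are asked about is a \emph{conjecture}, and the paper offers no proof of it: it is recorded as an open problem (the generalised volume conjecture), with the surrounding text only supplying context --- the definition of the $A$-polynomial, the Neumann--Zagier potential $\log\ell=v=S_0'(u)$, and the asymptotic expansion of the coloured Jones polynomial. So there is no argument in the paper to compare yours against, and your submission should not be read as a proof either: it is the standard heuristic programme, and you yourself correctly identify the points at which it breaks down.

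To be concrete about why the gaps are fatal rather than technical. Your first step (extract the $\h^0$ coefficient of the WKB ansatz to get $A_K(e^u,e^{S_0'(u)})=0$, hence $S_0=\int^u v\,du'$) is essentially a restatement of the definition of $S_0$ given in the conjecture, not progress towards it. The actual content of the conjecture is that the \emph{coloured Jones polynomial} has an asymptotic expansion of WKB type whose leading term is this $S_0$ on the geometric branch. Even granting the AJ conjecture $\hat{A}_K J_N=0$ (itself open in general), a $q$-difference equation has a full lattice of formal WKB solutions indexed by the branches of $A_K=0$, and identifying which exact solution $J_N$ is --- equivalently, controlling the connection problem and Stokes phenomena so that the geometric branch dominates the abelian one --- is precisely the unproven analytic core; nothing in the paper's machinery (topological recursion, the local Airy model of Section~\ref{sec:locfac}) addresses it, and the complete structure sitting at a singular point of the curve compounds the difficulty, as you note. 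Your second step, the Neumann--Zagier/Schl\"afli identification of $\mathrm{Im}\int v\,du$ with the volume along the geometric component, is the one genuinely rigorous ingredient, but it only says that \emph{if} the asymptotics hold with the right branch then the leading term is the volume. In short: your outline is a faithful account of why the conjecture is believed, and an honest account of why it is not a theorem; it should be presented as motivation, not as a proof.
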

\begin{conjecture}
[Dikjgraaf-Fuji-Manabe \cite{DFMVol}]
Consider the quantisation $\hat{A}_K$ that comes out of topological recursion applied to $A_K(m,\ell)=0$.  In other words, use the $S_k(u)$ calculated via topological recursion and \eqref{exactSk} to construct the wave function which can be used to produce $\hat{A}_K$.  Then
$$\hat{A}_KJ_N(K;e^{\h})=0.$$
\end{conjecture}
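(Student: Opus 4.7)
The plan is to split the conjecture into two parts. Part one: make explicit the quantisation $\hat{A}_K$ produced by topological recursion and identify the wave function it annihilates. The recipe is already laid out in the excerpt---apply topological recursion to $A_K(m,\ell)=0$ in logarithmic coordinates $(u,v)=(\log m,\log \ell)$ so that $[\uh,\vh]=-\h$, assemble the multidifferentials $\omega^g_n$ into $S_k(u)$ via \eqref{exactSk}, form the wave function $\psi_{\text{TR}}(u,\h)=\exp\bigl(\h^{-1}S_0(u)+\sum_{k\geq 0}\h^kS_{k+1}(u)\bigr)$, and then reconstruct $\hat{A}_K$ order by order in $\h$ using the normal-ordered expansion \eqref{normord} adapted to the $q$-difference setting $\x\y=q^{-1}\y\x$ described in Section~\ref{sec:use}. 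By construction $\hat{A}_K\psi_{\text{TR}}=0$.

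Part two: identify $\psi_{\text{TR}}(u,\h)$ with $J_N(K;e^{\h})$ under the substitution $u=N\h$. The excerpt provides the asymptotic expansion $\log J_N(K;e^{-\h})\sim \h^{-1}S_0(u)+\sum_{k\geq 0}\h^kS_{k+1}(u)$ with $N\h=u+2\pi i$, which is precisely the WKB form \eqref{wavef} of a wave function for the $A$-polynomial curve. Thus the conjecture reduces to showing that the $S_k$ coming from topological recursion on $A_K(m,\ell)=0$ agree with the $S_k$ appearing in the Jones asymptotics. Once this matching holds to all orders in $\h$, the operator $\hat{A}_K$ annihilates $J_N(K;e^{\h})$ at least as a formal power series in $\h$. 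To promote asymptotic annihilation to the stated exact $q$-difference identity, one uses that $\hat{A}_K$ has polynomial coefficients in $\x$ and $q$ and that $J_N$ is $q$-holonomic, so vanishing of $\hat{A}_K J_N$ to all orders in $\h$ forces exact vanishing up to a factor eliminable by a standard normalisation.

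The main obstacle is clearly part two: matching the Jones asymptotics with the topological-recursion $S_k$. The leading case $k=0$ is already the volume conjecture of Kashaev--Murakami--Murakami--Gukov, itself open in general and verified essentially only in the examples where part two has been checked (torus knots, the figure-eight, a handful of twist knots). The subleading invariants $S_k(u)$ for $k\geq 1$ are expected to encode Reidemeister torsion and higher perturbative corrections around the flat connection, and no conceptual identification with the output of topological recursion is currently available in full generality. A second serious difficulty is that topological recursion in its standard form demands simple zeros of $du$ on a smooth irreducible curve, whereas generic $A$-polynomials are reducible and exhibit higher branch behaviour; one must therefore first establish the variant of topological recursion alluded to in Section~\ref{sec:use}. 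Finally, the last step---upgrading asymptotic to exact annihilation---is essentially the content of Garoufalidis's AJ conjecture, also still open in the generality required here, so any complete proof will need to develop these two threads in parallel.
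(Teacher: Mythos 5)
There is a fundamental mismatch here: the statement you were asked to prove is presented in the paper as an open \emph{conjecture} (attributed to Dijkgraaf--Fuji--Manabe), and the paper offers no proof of it whatsoever --- it is stated, contextualised via the volume conjecture and Dimofte's gluing approach, and left open. Your ``proof'' is therefore not comparable to anything in the paper, and, to your credit, you essentially acknowledge that it is not a proof at all: part two of your plan --- matching the $S_k(u)$ produced by topological recursion on $A_K(m,\ell)=0$ with the $S_k(u)$ appearing in the asymptotic expansion of the coloured Jones polynomial --- \emph{is} the conjecture, not a step towards it. Deferring it to ``once this matching holds to all orders in $\h$'' is circular. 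Likewise, your final upgrade from asymptotic to exact annihilation leans on the AJ conjecture, which you correctly note is also open; a proof cannot rest on two unproven conjectures.

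That said, as a \emph{roadmap} your decomposition is sensible and consistent with how the paper frames the problem: part one (constructing $\hat{A}_K$ from the topological-recursion wave function via the normal-ordered expansion \eqref{normord} in the $q$-difference setting $\x\y=q^{-1}\y\x$) is indeed well-defined modulo the primitive-choice ambiguity the paper discusses in Section~\ref{sec:tfam}, and your identification of the technical obstruction that generic $A$-polynomials are reducible with non-simple zeros of $du$ --- requiring the ``think globally, compute locally'' variant cited in Section~\ref{sec:use} --- is a genuine and relevant point. But the honest conclusion is that no proof exists here, in your write-up or in the paper.
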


Dimofte \cite{DimQua} has a beautiful approach to the quantum curve of the $A$-polynomial.  He shows that the character variety of an ideal hyperbolic tetrahedron gives rise to the curve $\{y+x^{-1}-1=0\}\subset\bc^*\times\bc^*$ with quantum curve $(\y+\x^{-1}-1)\psi(x,\h)=0$.  As described above, $\y f(x)=f(qx)$ for $q=e^{\h}$ so the quantum curve equation can be written 
$$\psi(qx,\h)=(1-x^{-1})\psi(x,\h)$$ 
which is satisfied by a quantum dilogarithm function---a building block of link invariants \cite{KasLin}.   A hyperbolic three-manifold can be built by gluing hyperbolic tetrahedra and Dimofte studies how such gluing affects the quantum curve in order to build up the quantum curve of the $A$-polynomial. 

{\em Acknowledgements.} The author benefited from conversations with numerous people, and in particular would like to thank Ga\"{e}tan Borot, Norman Do, Petya Dunin-Barkowsky, Bertrand Eynard, Peter Forrester, John Harnad, Motohico Mulase, Sergey Shadrin, and participants of String-Math 2014 and the Banff workshop {\em Quantum Curves and Quantum Knot Invariants}.

\section{Topological recursion}  \label{sec:EO}

Topological recursion as developed by Chekhov, Eynard and Orantin \cite{CEyHer,EOrInv} arose out of loop equations satisfied by matrix models.  It takes as input a \emph{spectral curve} $(C,B,x,y)$ consisting of a compact Riemann surface $C$, a bidifferential $B$ on $C$, and meromorphic functions $x,y:C\to\bc$.   A technical requirement is that the zeros of $\dd x$ are simple and disjoint from the zeros of $\dd y$~\cite{EOrInv}.  In the case when $C$ is rational with global rational parameter $z$, $B= \frac{\dd z_1 \otimes \dd z_2}{(z_1-z_2)^2}$ is the Cauchy kernel.    

The output of topological recursion is a collection of multidifferentials $\omega^g_n(p_1, \ldots, p_n)$ for integers $g\geq 0$ and $n \geq 1$, on $C$ --- in other words, a tensor product of meromorphic 1-forms on the product $C^n$, where $p_i\in C$. When $2g-2+n>0$, $\omega^g_n(p_1, \ldots, p_n)$ is defined recursively in terms of local information around the poles of $\omega^{g'}_{n'}(p_1, \ldots, p_{n'})$ for $2g'+2-n' < 2g-2+n$. 


The invariants $\omega^g_{n}(p_1,...,p_n)$ are defined recursively from simpler $\omega^{g'}_{n'}$ for $2g'-2+n'<2g-2+n$.  The recursion can be represented pictorially via different ways of decomposing a genus $g$ surface with $n$ labeled boundary components into a pair of pants containing the first boundary component and simpler surfaces. 
\begin{center}
\includegraphics[scale=0.12]{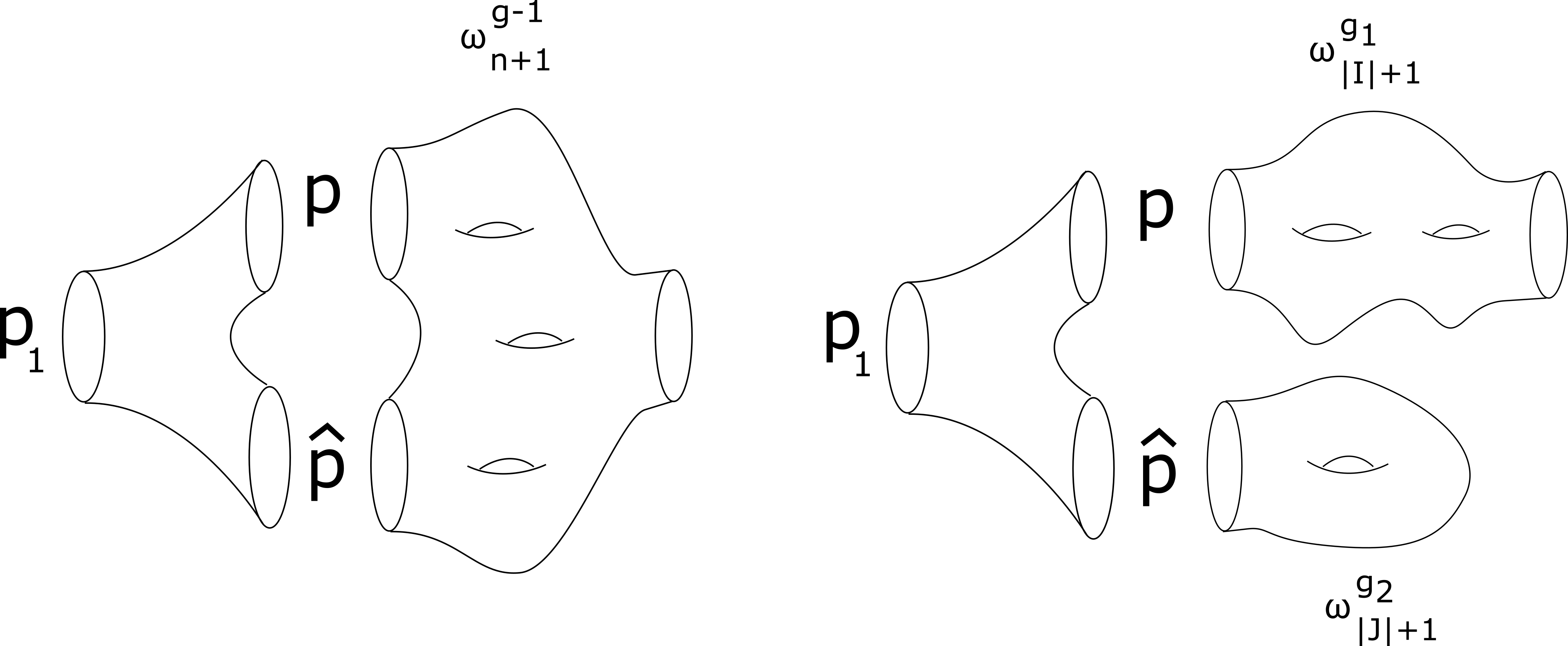}
\end{center}
For $2g-2+n>0$ and $S = \{2, \ldots, n\}$, define
\begin{equation}  \label{EOrec}
\omega^g_{n}(p_1,\pp_{S})=\sum_{\alpha}\res_{p=\alpha}K(p_1,p) \bigg[\omega^{g-1}_{n+1}(p,\hat{p},\pp_{S})+ \mathop{\sum_{g_1+g_2=g}}_{I\sqcup J=S}^\circ \omega^{g_1}_{|I|+1}(p,\pp_I) \, \omega^{g_2}_{|J|+1}(\hat{p},\pp_J) \bigg]
\end{equation}
where the outer summation is over the zeros $\alpha$ of $\dd x$ and the $\circ$ over the inner summation means that we exclude terms that involve $\omega_1^0$.  The point $\hat{p}\in C$ is defined to be the unique point $\hat{p}\neq p$ close to $\alpha$ such that $x(\hat{p})=x(p)$.  It is unique 
since each zero $\alpha$ of $\dd x$ is assumed to be simple, and we need only consider $p\in C$ close to $\alpha$.  We see that the recursive definition of $\omega^g_n(p_1, \ldots, p_n)$ uses only local information around zeros of $\dd x$.  The recursion takes an input the base cases
\[
\omega^0_1=-y(z)\,\dd x(z) \qquad \text{and} \qquad \omega^0_2=B(z_1,z_2).
\]
The kernel $K$ is defined by the following formula
\[
K(p_1,p)=\frac{-\int^p_{\hat{p}}\omega_2^0(p_1,p')}{2[y(p)-y(\hat{p})] \, \dd x(p)}
\] 
which is well-defined in the vicinity of each zero of $\dd x$. Note that the quotient of a differential by the differential $\dd x(p)$ is a meromorphic function.  For $2g-2+n>0$, the multidifferential $\omega^g_n$ is symmetric, with poles only at the zeros of $\dd x$ and vanishing residues.

The poles of the invariants $\omega^g_n$ occur at the zeros of $dx$ and are of order $6g-6+4n$ there.  One property of $\omega^g_n$ that we needed in Section~\ref{sec:locfac} and proven in \cite{EOrInv} is that the highest order part of $\omega^g_n$ near a pole is given by $y'(0)^{2-2g-n}\omega^{g\text{\ Airy}}_n$\hspace{-1mm}.  \hspace{1mm}Here $\omega^{g\text{\ Airy}}_n$ is the invariant obtained from the curve $y^2-x=0$ which is a local model near  zeros of $dx$ for any plane curve that is the input of topological recursion.  

Now that we have defined topological recursion precisely we need to qualify the 
$k=1$ part of \eqref{exactSk} which uses a  regularised version of \eqref{exactSk}.
\begin{equation}  \label{exactS01}
S_1(p)=\frac{1}{2!}\int^p\int^p\left[\omega^0_2(p_1,p_2)-\frac{dx_1dx_2}{(x_1-x_2)^2}\right]
\end{equation}
where $\omega^0_2(p_1,p_2)=B$ is part of the input data for the spectral curve.

\begin{remark}
Heuristically $\omega^{(g)}_n(p_1,p_2,...,p_n)=\langle{\rm Tr}\frac{1}{x(p_1)-A}...{\rm Tr}\frac{1}{x(p_n)-A}\rangle_c$ is the expected value of a resolvent in a matrix model.  The subscript $c$ means cumulant, or the connected part in a graphical expansion.  Topological recursion follows from the loop equations satisfied by the resolvents.
\end{remark}

For $2g-2+n>0$, the invariants $\omega^g_n$ of spectral curves satisfy the following {\em string equations} for $m = 0, 1$~\cite{EOrInv}.
\begin{equation}  \label{eq:string}
\sum_{\alpha} \res_{z=\alpha} x^my\omega^g_{n+1}(z,z_S)=-\sum_{j=1}^ndz_j\frac{\partial}{\partial z_j}\left(\frac{x^m(z_j)\omega^g_n(z_S)}{dx(z_j)}\right)
\end{equation}
They also satisfy the dilaton equation~\cite{EOrInv}
\begin{equation} \label{dilaton}
\sum_{\alpha}\res_{z=\alpha}\Phi(z)\, \omega^g_{n+1}(z,z_1, \ldots ,z_n)=(2-2g-n) \,\omega^g_n(z_1, \ldots, z_n),
\end{equation}
where the summation is over the zeros $\alpha$ of $\dd x$ and $\Phi(z)=\int^z y\,\dd x(z')$ is an arbitrary antiderivative. The dilaton equation enables the definition of the so-called {\em symplectic invariants}
\[
F_g=\sum_{\alpha}\res_{z=\alpha}\Phi(z)\,\omega^g_{1}(z).
\]
We will see in Section~\ref{sec:tfam} the importance of the string equations for the quantum curve.

\begin{remark}
Topological recursion generalises to {\em local} curves in which $C$ is an open subset of a compact Riemann surface.  This is because the recursive definition of $\omega^g_n(p_1,\ldots,p_n)$ uses only local information around zeros of $dx$.  Whereas, quantum curves are global in nature, essentially requiring the full algebraic structure of the curve.  Since the formula \eqref{exactSk} still makes sense for local curves, it suggests that one might be able to make sense of a local quantum curve.  This would be important because it was proven in \cite{DOSSIde} that any semi-simple cohomological field theory can be encoded via topological recursion applied to a local curve.  Thus it would raise the question of what role quantum curves can play in cohomological field theories.
\end{remark}

\subsection{Choice of primitive}
In \eqref{exactSk} the expression
$$S_k(p)=\sum_{2g-1+n=k}\int^p\int^p...\int^p\omega^g_n
$$
is ambiguous since integration is not uniquely defined.  It needs to be interpreted as follows.  A {\em primitive} $F^g_n(p_1,...,p_n)$ of $\omega^g_n(p_1,...,p_n)$ is a meromorphic function on the spectral curve $C$ that satisfies:
$$d_1...d_nF^g_n(p_1,...,p_n)=\omega^g_n(p_1,...,p_n)
$$
where $d_i$ is the exterior derivative in the $i$th coordinate $p_i$.  For example, choose $p_i,q_i\in C$, $i=1,...,n$, then one possible primitive of $\omega^g_n$ is given by $\int^{p_1}_{q_1}\int^{p_2}_{q_2}...\int^{p_n}_{q_n}\omega^g_n$ which is a function of $p_1,...,p_n$ with $q_i$ fixed.  With such a choice of primitive
$$S_k(p)=\sum_{2g-1+n=k}F^g_n(p,p,...,p)
$$
for $p\in C$.  The sum is finite hence $S_k(p)$ is a meromorphic function on the spectral curve.

The choice of primitive $F^g_n(p_1,...,p_n)$ is rather important.  For a function of a single variable one can retrieve the function from any primitive---simply differentiate---so the choice of primitive is not so important.  This remains true for a function of severable variables,  however it is no longer true for a {\em specialisation} of a primitive at a single variable.  Consider a function of severable variables $f(z_1,...,z_n)$ and any primitive $F(z_1,...,z_n)$, so $\frac{\partial}{\partial z_1}...\frac{\partial}{\partial z_n}F(z_1,...,z_n)=f(z_1,...,z_n)$.  Then there is no way to retrieve the function $f(z_1,...,z_n)$ from its specialisation $F(z,...,z)$.
\begin{example}
Consider $f(z_1,z_2)=3z_1^2+2z_2$ and $g(z_1,z_2)=4z_1z_2+2z_2$ and choose respective primitives $F(z_1,z_2)=z_1^3z_2+z_1z_2^2$ and $G(z_1,z_2)=z_1^2z_2^2+z_1z_2^2$. From the specialisation $F(z,z)=z^4+z^3=G(z,z)$ we cannot uniquely determine $f(z_1,z_2)$.
\end{example}

Usually a primitive is obtained by $S_k(p)=\sum_{2g-1+n=k}\int^p_{p_0}\int^p_{p_0}...\int^p_{p_0}\omega^g_n$, i.e. integration from a common point $p_0\in C$.  For rational curves there is another natural primitive.   On a rational curve $\omega^g_n$ is a sum of its principal parts.  Each principal part has its own natural primitive---take the principal part of any primitive.  Note that the specialisation $F^g_n(p,..,p)$ of a primitive of $\omega^g_n(p_1,..,p_n)$ obtained in this way from principal parts is in general not of the form
$\displaystyle\int^{p}_{p'}\int^{p}_{p'}...\int^{p}_{p'}\omega^g_n(p_1,..,p_n)$.

\subsubsection{A family of choices of primitives.}\label{sec:tfam}  The string equation \eqref{eq:string} allows us to generate a $t$-parametrised family of quantum curves via the action of the operator $e^{t\h\frac{d}{dx}}$ on wave functions.
\begin{proposition}  \label{th:famqc}
A quantum curve satisfying \eqref{exactSk} lives in a family of quantum curves.  Concretely, if $\psi(p,\h)$ satisfies \eqref{exactSk}, then the family of wave functions 
\begin{equation} \label{wavet}
\psi(p,\h,t)=e^{t\h\frac{d}{dx}}\psi(p,\h)
\end{equation}
also satisfy \eqref{exactSk} (up to unstable terms) for different choices of primitive.
\end{proposition}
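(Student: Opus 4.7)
The plan is to interpret $e^{t\h\, d/dx}$ as the exponential of a derivation, reduce it to a formal shift in the $x$-coordinate, and then reabsorb the resulting Taylor corrections into new choices of primitives for $\omega^g_n$.

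First, I would observe that $\h\, d/dx$ is a derivation on the algebra of formal series in $\h$ whose coefficients are meromorphic functions on $C$, so its exponential is an algebra automorphism on this algebra. Applying $e^{t\h\, d/dx}$ to $\psi = e^\Phi$ therefore gives $\psi(p,\h,t) = \exp\bigl(e^{t\h\, d/dx}\Phi\bigr)$, and by Taylor's theorem the operator $e^{t\h\, d/dx}$ acts on meromorphic functions as $f\mapsto f(p')$, where $p'\in C$ is the formal shifted point with $x(p') = x(p) + t\h$ on the same sheet as $p$. Hence $\log\psi(p,\h,t) = \Phi(p',\h)$.

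Next, I would substitute the WKB ansatz $\Phi(p,\h) = \sum_{g,n}\frac{\h^{2g-2+n}}{n!}F^g_n(p,\ldots,p)$ and Taylor expand each $F^g_n(p',\ldots,p')$ around $p_i = p$ in powers of $t\h$. Collecting powers of $\h$ produces candidate $\tilde S_j(p;t)$ which mix contributions from several $(g,n)$ slots. The goal is to reassemble these into the form $\tilde S_j(p;t) = \sum_{2g-1+n=j}\tfrac1{n!}\tilde F^g_n(p,\ldots,p)$ with each $\tilde F^g_n$ a primitive of $\omega^g_n$. For each $(g,n)$ with $n\ge 2$, adding to $\tilde F^g_n$ any sum of functions of strictly fewer variables preserves the primitive condition $d_1\cdots d_n\tilde F^g_n = \omega^g_n$, so we have enough freedom to match an arbitrary diagonal $\tilde F^g_n(p,\ldots,p)$.

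The main obstacle is the case $n=1$: a primitive of $\omega^g_1$ is pinned down up to an additive constant only, and its diagonal is just the function itself, so the diagonal cannot be freely modified. This is precisely what forces the phrase ``up to unstable terms'' in the statement. Taylor contributions that formally belong to a $(g,1)$ slot must either be reassigned to $(g',n')$ slots with $n'\ge 2$ at the same $\h$-order, or absorbed into $S_0$ and the regularised $S_1$ of \eqref{exactS01}. The consistency of this reassignment is underpinned by the string equation \eqref{eq:string}: since $\partial_x S_0 = -y$ follows from $\omega^0_1 = -y\,\dd x$, Taylor differentiation in $x$ always produces $y$-factors, and \eqref{eq:string} says exactly that $y\omega^g_{n+1}$, summed as residues at the zeros of $\dd x$, is an $x$-derivative of $\omega^g_n$. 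This identifies the Taylor corrections as genuine $x$-derivatives of primitives and makes the reassembly possible.
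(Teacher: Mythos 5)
Your overall strategy---read $e^{t\h\frac{d}{dx}}$ as a Taylor shift in $x$ and then reabsorb the resulting corrections to the $S_k$ into modified primitives, using the freedom to add functions of fewer variables---is the same as the paper's, and you correctly single out the string equation as the essential input. But the one step that carries the content of the proposition is exactly the step you leave as a gesture: you assert that the string equation ``identifies the Taylor corrections as genuine $x$-derivatives of primitives and makes the reassembly possible'' without constructing the reassignment. The paper does this by introducing the residue functional $L_pf=\sum_{a_i}\res_{p=a_i}dy(p)f(p)$ over the zeros of $dx$ and deriving from the $m=0$ case of \eqref{eq:string} the identity \eqref{dFstring}, namely $\frac{d}{dx}F^g_n(p,\dots,p)=L_pF^g_{n+1}(p,p_1,\dots,p_n)$. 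This is the whole mechanism: applying $L$ in one slot of $F^g_{n+1}$ produces a function independent of that variable, so the correction $t\frac{d}{dx}F^g_n(p,\dots,p)$ arising at $\h$-order $2g-1+n$ is realised by adding to $F^g_{n+1}$ the symmetric term $t\sum_jL_{p_j}F^g_{n+1}$, whose summands are each annihilated by $d_{p_1}\cdots d_{p_{n+1}}$, so that the deformed $\cf^g_{n+1}$ of \eqref{Ft} is still a primitive of $\omega^g_{n+1}$. Without this identity your argument collapses to the observation that the diagonal of a primitive with $n\geq 2$ can be modified arbitrarily; that proves the letter of the statement vacuously, but it neither exhibits the specific family of primitives attached to $e^{t\h\frac{d}{dx}}$ nor explains why the string equation enters at all.

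Two further points are off. Your description of the role of the string equation (``Taylor differentiation in $x$ always produces $y$-factors'') is not correct: $\frac{d}{dx}S_k$ for $k\geq 1$ produces no $y$-factors; the $y$ in \eqref{eq:string} is the kernel of the residue functional $L_p$, not something generated by differentiation. And the $n=1$ slots are not the source of the caveat ``up to unstable terms'': in the correct bookkeeping a correction of order $t^m$ with $m\geq 1$ originating from $F^g_n$ is reassigned to the slot $(g,n+m)$ with $n+m\geq 2$, so no correction ever needs to land in an $n=1$ slot. The unstable terms are the $(g,n)=(0,1)$ and $(0,2)$ contributions, for which \eqref{eq:string} is only available when $2g-2+n>0$ and the primitive is the regularised one of \eqref{exactS01}; these are the discrepancies the paper conjugates away.
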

\begin{proof}
The action of $e^{t\h\frac{d}{dx}}$ on $S_k(p)$ is given by
$$S_k(p,t)=\sum_{m=0}^kt^m\Big(\frac{d}{dx}\Big)^mS_{k-m}(p)$$
so in particular \eqref{wavet} is well-defined.

The new wave function also satisfies a wave equation, with the same semi-classical limit, simply by conjugation:
$$\Pw(\x,\y)\psi(p,\h)=0\Rightarrow e^{t\h\frac{d}{dx}}\Pw(\x,\y)e^{-t\h\frac{d}{dx}}\cdot \psi(p,\h,t)=0.$$
What is a little deeper is that $\psi(p,\h,t)$ also satisfies \eqref{exactSk} up to unstable terms.  The unstable terms are not important since one can conjugate by any discrepancy.

Define the linear functional $L_p$ on meromorphic functions on $C$ by
$$L_pf=\sum_{p=a_i}\res_{p=a_i}dy(p)f(p)$$
where the sum is over all points satisfying $dx(a_i)=0$.  Then 
\begin{equation} \label{dFstring}
\frac{d}{dx}F^g_n(p,p,\dots,p)=\sum_{j=1}^n\left.\frac{d}{dx(p_j)}F^g_n(p_1,\dots,p_n)\right|_{p_1=p_2=\dots=p}\hspace{-10mm}=L_pF^g_{n+1}(p,p_1,\dots,p_n)
\end{equation}
where the first equality is the chain rule and the second equality uses the string equation  \eqref{eq:string} with $m=0$.  We can iterate \eqref{dFstring} to make sense of higher derivatives $(\frac{d}{dx})^mF^g_n$.  

The idea is to adjust $F^g_{n+1}$ by terms of the type in the right hand side of \eqref{dFstring} to again get a symmetric function such as $F^g_{n+1}+t\sum_{j=1}^{n+1}L_{p_j}F^g_{n+1}(p_1,\dots,p_{n+1})$.  Define the function symmetric in $p_i$
\begin{equation} \label{Ft}
\cf^g_n(t,p_1,\dots,p_n)=\sum_{m=0}^k\sum_{\{i_1,...,i_m\}}t^mL_{p_{i_1}}\cdots L_{p_{i_m}}F^g_n(p_1,\dots,p_n)
\end{equation}
where the second sum is over all cardinality $m$ subsets of $\{1,...,n\}$.  Notice that
$$ d_{p_1}\cdots d_{p_n}\cf^g_n(t,p_1,\dots,p_n)=d_{p_1}\cdots d_{p_n}F^g_n(p_1,\dots,p_n)=\omega^g_n(p_1,\dots,p_n)$$
since we have adjusted only by summands independent of at least one variable $p_j$ and hence annihilated by $d_{p_1}\cdots d_{p_n}$.

Hence we see that $S_k(p,t)$ satisfies \eqref{exactSk} for the choice of primitive $\cf^g_n(t,p_1,\dots,p_n)$ and the proposition is proven.
\end{proof}
The proposition is interesting both for the extra structure it brings to the wave functions---essentially a relationship with the string equation---and to emphasise the fact that there there is a choice of primitive.  We see that \eqref{exactSk} still leaves some ambiguity in the construction of the quantum curve.  
\begin{remark}
The main tool in the proof of Proposition~\ref{th:famqc} is the string equation \eqref{eq:string} that comes out of topological recursion.  In Section~\ref{sec:ex} we apply this idea to the quantum curve associated to Gromov-Witten invariants of the sphere and we find that the string equation applied there coincides with the string equation that comes out of Gromov-Witten invariants.  Furthermore, we see that the $t$-parametrised family $\cf^g_n(t,p_1,\dots,p_n)$ used there has enumerative meaning---it is related to insertions of the so called puncture operator, and gives a relation to the Toda equations. 
\end{remark}

\section{Enumerative examples}   \label{sec:ex}

Many examples of the conjectural relation between quantum curves and topological recursion have been proven in the literature.  The quantum curve was shown to satisfy \eqref{exactSk} for simple Hurwitz numbers \cite{MSuSpe,ZhoQua} and simple Hurwitz numbers of an arbitrary base curve \cite{LMSQua};  monotone Hurwitz numbers \cite{DDMTop}; Belyi maps \cite{MSuSpe}; bipartite Belyi maps \cite{DNoTop,KZoVir}; hypermaps \cite{DMaQua}; Gromov-Witten invariants with target $X$ for $X=\{\text{pt}\}$ \cite{ZhoInt}; $X=\bp^1$ \cite{DMNPSQua}; $X=$ the topological vertex and the resolved conifold \cite{ZhoQua}; and spectral curves coming from matrix models \cite{BBERat}.

There are essentially two types of proofs.  Firstly there are the proofs that use the $\h$ expansion of $\log\psi(p,\h)$ and stay closer to the conjecture, essentially giving a reason for the conjecture.  Secondly there are the proofs that use an expansion of $\psi(p,\h)$ in $x$ and exploit the fact that the Euler characteristic is a coarser invariant than genus, and hence the proofs may be simpler.   It is rather natural to use an $x$ expansion when implementing the quantum curve on a computer. 
One might also describe the $\h$ expansion proof as a proof concentrated around the zeros of $dx$, while the second type of proof considers expansions at regular values of $x$.  The quantum curve for simple Hurwitz numbers and Belyi maps are examples that have been proven using both approaches---the first approach in \cite{MSuSpe} and the second approach in \cite{ZhoInt}, respectively \cite{DMaQua}.  

In the remainder of this section we describe two specific examples of quantum curves.  In the first of these---the quantum curve of the Gromov-Witten invariants of $\bp^1$---we consider a family of quantum curves parametrised by $t\in\bc$ with semi-classical limit independent of $t$.  They correspond to different choices of primitives in \eqref{exactSk}.  In the second of these---Belyi maps---we present two proofs in order to contrast the two approaches described above used in most proofs.

\subsection{Gromov-Witten invariants of $\bp^1$}
In the following example the wave function is given by a specialisation of the partition function of Gromov-Witten invariants of $\bp^1$.  We consider a family of quantum curves parametrised by $t\in\bc$ with semi-classical limit independent of $t$. The $t=1/2$ case appeared in \cite{DMNPSQua}.  The $t$-dependence is rather useful, making a connection with the Toda lattice.  

Let $\Mbar_{g,n}(\bp^1,d)$ denote the moduli space of stable maps of degree $d$ from an $n$-pointed genus $g$ curve to $\bp^1$. The  descendant Gromov-Witten invariants of $\bp^1$ are defined by 
\begin{equation}
\label{eq:GW}
\langle \prod_{i=1}^n
\tau_{b_i}(\alpha_i)\rangle_{g.n} ^d
:=
\int _{[\Mbar_{g,n}(\bp^1,d)]^{vir}}
\prod_{i=1}^n \psi_i ^{b_i} ev_i^*(\alpha_i),
\end{equation}
where 
$[\Mbar_{g,n}(\bp^1,d)]^{vir}$ is the virtual 
fundamental class of the moduli space, of degree given by its virtual dimension $2g-2+n +2d$,
\begin{equation*}
ev_i:\Mbar_{g,n}(\bp^1,d)\longrightarrow \bp^1
\end{equation*}
is a natural morphism defined by evaluating a stable map at the $i$-th marked point of the source curve, $\alpha_i\in H^*(\bp^1,\bq)$ is a cohomology class of the target $\bp^1$, and $\psi_i$ is  the tautological cotangent class in $H^2(\Mbar_{g,n}(\bp^1,d),\bq)$.  We denote by $1$ the generator of $H^0(\bp^1,\bq)$, and by $\omega\in H^2(\bp^1,\bq)$ the Poincar\'e dual to the point class.  We call $\tau_k(\omega)$ {\em stationary} classes since the pull-back $ev_i^\ast(\omega)\subset \modm_{g,n}(\bp^1,d)$ restricts to stable maps $f$ with $f(p_i)=x_i$ for a given stationary point $x_i\in\bp^1$.  

The free energy of the Gromov-Witten invariants of $\bp^1$ is defined by
$$ F_g=  \sum_dq^d\left\langle\exp\left\{\sum_{i\geq 0}^{\infty}\tau_i(\omega)t_i+\tau_i(1)s_i\right\}\right\rangle^g_d$$
and
$$
F=\sum_{g\geq 0} F_g=\frac{1}{2}s_0^2t_0+\frac{1}{6}s_0^3t_1+...-\frac{1}{24}t_0-\frac{1}{24}s_0t_1+...
+q(1+t_0+\frac{1}{2}t_0^2+...+s_0t_1+\frac{1}{2}s_0^2t_2+...)+...\quad .
$$
The partition function is defined to be
$$Z(t_0,t_1,...,s_0,s_1,...,q)=\exp F.$$

\subsubsection{Quantum curve}
A specialisation of the partition function gives rise to a wave function and quantum curve associated to Gromov-Witten invariants of $\bp^1$.  We will describe its semi-classical limit below.  Define
$$
\psi(x,\hbar,q,t)=Z\Big(t_i=i!\left(\frac{\hbar}{x}\right)^{i+1}\hspace{-.3cm},q=\frac{q}{\hbar^2},s_0=t,s_i=0,i>0\Big)
=\exp\left\{\frac{q}{\hbar^2}+\frac{-\frac{\hbar}{24}+\frac{q}{\hbar}+\frac{1}{2}t^2\hbar}{x}+\frac{..}{x^2}+...\right\}$$
Note that we have switched off all the non-stationary insertions except the class $\tau_0(1)$ which is known as the {\em puncture operator}.   Geometrically we have placed the target stationary points at a single point $p\in\bp^1$.  The coefficient of $x^k$ counts all covers with local {\em virtual} degree $k$ over $p$ defined to be the total ramification plus number of preimage points over $p$.  This local virtual degree can differ from the actual local degree.  For example, the descendant $\tau_1(\omega)$ can be realised on the space of degree 1 stable maps $\Mbar_{1,1}(\bp^1,1)$ hence appears as $\langle\tau_1(\omega)\rangle x^{-2}$ in $\psi$.  The exponent $-2$ of $x$, which corresponds to local virtual degree 2, reflects that fact that $\tau_1(\omega)$ should arise from a locally 2 to 1 map.

The wave function of the quantum curve is obtained from $\psi$ by modifying the unstable terms:
\begin{equation}  \label{unstwave}
\psi_1(x,\hbar,q,t)=\psi(x,\hbar,q,t)\cdot x^{-t}\exp(\frac{1}{\hbar}(x\ln x-x)).
\end{equation}
\begin{theorem} \cite{DMNPSQua} \label{th:QGW}
$$
[e^{\hbar\frac{d}{dx}}+qe^{-\hbar\frac{d}{dx}}-x+(t-\frac{1}{2})\hbar]\psi_1=0.$$
\end{theorem}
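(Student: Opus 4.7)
The semi-classical limit $\h\to 0$ of the operator $e^{\h d/dx}+qe^{-\h d/dx}-x+(t-\tfrac{1}{2})\h$ is $e^y+qe^{-y}-x$, the spectral curve for stationary Gromov-Witten invariants of $\bp^1$. This motivates the statement and suggests exploiting the 2D Toda integrable hierarchy satisfied by $Z$. The plan is to verify the difference equation after stripping off the logarithmic prefactor in \eqref{unstwave} and translating the shift operators $e^{\pm\h d/dx}$ into vector fields on the Gromov-Witten time variables. Since the $t=\tfrac{1}{2}$ case is established in \cite{DMNPSQua}, the new content is the extension that incorporates the puncture deformation $s_0=t$ and produces the resulting $\h$-shift $(t-\tfrac{1}{2})\h$.

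The key chain-rule computation, using the specialisation $t_i=i!(\h/x)^{i+1}$, gives $\partial_x t_i=-t_{i+1}/\h$ and hence
\[
\h\frac{d}{dx}\psi=-\sum_{i\geq 0}t_{i+1}\frac{\partial \psi}{\partial t_i}.
\]
This is precisely the vector field appearing in the string equation (Virasoro constraint $L_{-1}$) for Gromov-Witten of $\bp^1$, which expresses $\partial_{s_0}Z$ in terms of $\sum_i t_{i+1}\partial_{t_i}Z$, $\sum_i s_{i+1}\partial_{s_i}Z$, and a disk term $t_0 s_0/\h^2$. Setting $s_0=t$ and $s_i=0$ for $i>0$ collapses most of the sum and converts $\h d/dx$ acting on $\psi$ into a linear combination of $\partial_t\psi$ and multiplication operators. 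Taylor-expanding $e^{\pm\h d/dx}$ and applying the same procedure to the logarithmic correction $x^{-t}\exp(\h^{-1}(x\ln x-x))$---whose logarithmic derivative is designed to contribute the dominant $x/\h$ and $-x/\h$ terms to $e^{\pm\h d/dx}\psi_1/\psi_1$---gives a first-principles derivation. The $q$-dependence is handled via the divisor equation $q\partial_q\sim\h\partial_{t_0}$.

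The main obstacle is the precise bookkeeping of the unstable contributions: the constant $(t-\tfrac{1}{2})\h$ must be extracted correctly, with the $t$ arising from differentiating the $s_0=t$ sector and the $-\tfrac{1}{2}$ from the $\tau_1(\omega)$ dilaton class together with the $(g,n)=(1,0)$ invariant $F_1$, which is also where the modification \eqref{unstwave} intervenes. A cleaner alternative, which bypasses much of this bookkeeping, is to use the Okounkov--Pandharipande fermionic realisation of $Z$: there $\psi_1$ is a Baker--Akhiezer function for the Toda hierarchy, $e^{\pm\h d/dx}$ correspond to bosonic translation operators on the vacuum, and the claimed equation becomes the defining Lax equation, which is known to hold for the stationary Gromov--Witten partition function of $\bp^1$.
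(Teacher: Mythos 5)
The paper does not actually reprove this statement: it quotes \cite{DMNPSQua} for the special case $q=1$, $t=\tfrac{1}{2}$, and the surrounding text supplies the short reduction of the general case to that one. Concretely, the string equation gives $D\psi_1=(\h\frac{d}{dx}+\frac{\partial}{\partial t})\psi_1=0$, hence $\psi_1(x,\h,q,t)=\psi_1(x-(t-\tfrac{1}{2})\h,\h,q,\tfrac{1}{2})$, and conjugating the $t=\tfrac{1}{2}$ operator $e^{\h\frac{d}{dx}}+qe^{-\h\frac{d}{dx}}-x$ by $e^{-(t-\frac{1}{2})\h\frac{d}{dx}}$ replaces $-x$ by $-x+(t-\tfrac{1}{2})\h$; this is exactly the mechanism of Proposition~\ref{th:famqc}. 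The $q$-dependence is restored from $q=1$ by homogeneity in the degree grading (equivalently the divisor equation).

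Your proposal assembles the right ingredients---the string equation converting $\h\frac{d}{dx}$ into vector fields on the time variables, the divisor equation for $q$, the Toda/Lax structure---but the central verification is missing. The $L_{-1}$ constraint linearises only the first derivative: because the specialisation $t_i=i!(\h/x)^{i+1}$ is itself $x$-dependent, $(\h\frac{d}{dx})^k\psi$ for $k\geq 2$ involves derivatives of the coefficients $t_{i+1}$ and does not simply iterate the vector field, so ``Taylor-expanding $e^{\pm\h\frac{d}{dx}}$ and applying the same procedure'' does not by itself produce the cancellation against $-x+(t-\tfrac{1}{2})\h$, and you never actually check that it does. The bookkeeping you flag as ``the main obstacle'' is precisely the content of the theorem, so deferring it is the gap; note also that the constant $(t-\tfrac{1}{2})\h$ is most transparently the conjugation parameter of the $t$-family with base point $t=\tfrac{1}{2}$, rather than something to be extracted from $\tau_1(\omega)$ and $F_1$. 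Finally, your ``cleaner alternative''---that $\psi_1$ is a Baker--Akhiezer function and the claimed equation is the Lax equation, ``known to hold''---is essentially the proof in \cite{DMNPSQua}; invoked in this form it is a citation rather than an argument, and it reproduces the paper's treatment minus the reduction to general $(q,t)$, which is the only part the displayed statement adds to the reference. If you make the step $D\psi_1=0\Rightarrow\psi_1(x,t)=\psi_1(x-(t-\tfrac{1}{2})\h,\tfrac{1}{2})$ explicit and then conjugate, your outline closes.
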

\begin{remark}
The proof in \cite{DMNPSQua} considers the case $q=1$, $t=1/2$ but it is not difficult to derive the statement here from that special case.
\end{remark}
\begin{remark}
The operator in Theorem~\ref{th:QGW} is a simple case of the Lax operator for the Toda lattice appearing in Dubrovin-Zhang and Takasaki-Takebe:
$$e^{\hbar\frac{d}{dx}}+v(x)+e^{u(x)}e^{-\hbar\frac{d}{dx}}.$$
It appears in Aganagic-Dijkgraaf-Klemm-Mari\~no \cite{ADKMVTop} as
$$H=e^{\hbar\frac{d}{dx}}+x+e^{-\hbar\frac{d}{dx}}$$
and the wave function $\psi$ is said to describe the insertion of a $D$-brane at a fixed $x$.  
\end{remark}

The semi-classical limit gives rise to the following spectral curve.  Put $\psi_1=\exp(\frac{1}{\hbar}S_0+S_1+\hbar S_2+...)$.  
\begin{align*}
0=\lim_{\hbar\to 0}\exp^{-\frac{1}{\hbar}S_0}&\left[e^{\hbar\frac{d}{d x}}+qe^{-\hbar\frac{d}{d x}}-x+(t-\frac{1}{2})\hbar\right]\exp(\frac{1}{\hbar}S_0+S_1+\hbar S_2+...)\\
&=\lim_{\hbar\to 0}[e^{\frac{S_0(x+\hbar)-S_0(x)}{\hbar}}e^{\hbar\frac{d}{d x}}+qe^{\frac{S_0(x-\hbar)-S_0(x)}{\hbar}}e^{-\hbar\frac{d}{d x}}-x+(t-\tfrac{1}{2})\hbar]\exp(S_1+\hbar S_2+...)\\
&=[e^{S_0'(x)}+qe^{-S_0'(x)}-x]\exp(S_1).
\end{align*}
Hence $e^{S_0'(x)}+qe^{-S_0'(x)}-x=0$ and for $z=e^{S_0'(x)}$ this defines the spectral curve
\begin{equation}  \label{specGW}
C=\{x=z+\frac{q}{z},\quad y=\ln z\}
\end{equation}
which agrees with the mirror Landau-Ginzburg model \cite{EHYTop}.  Although it is not algebraic, one can still apply topological recursion to $C$ to produce the stationary Gromov-Witten invariants of $\bp^1$.
\begin{theorem}[\cite{DOSSIde},\cite{NScGro}]  \label{th:GWEO}
(Analytic expansions around a branch of $\{x_i=\infty\}$ of) the invariants $\omega^g_n(C)$ of the curve $C$ defined in (\ref{specGW}) are generating functions for the stationary Gromov-Witten invariants of $\bp^1$:
\[ \omega^g_n=\sum_{\bf b}\left\langle \prod_{i=1}^n \tau_{b_i}(\omega) \right\rangle^g_d\cdot\prod_{i=1}^n(b_i+1)!x_i^{-b_i-2}dx_i-\delta_{g0}\delta_{n1}\ln{x_1}dx_1+\delta_{g0}\delta_{n2}\frac{dx_1dx_2}{(x_1-x_2)^2}.\]
\end{theorem}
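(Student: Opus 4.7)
The plan is to exploit the fact that the Gromov-Witten theory of $\bp^1$ gives a semi-simple cohomological field theory and then invoke the Dunin-Barkowski--Orantin--Shadrin--Spitz theorem \cite{DOSSIde} that realises any such CohFT as the output of topological recursion on a (possibly local) spectral curve. The job then splits into identifying $C=\{x=z+q/z,\ y=\ln z\}$ as precisely the spectral curve producing the $\bp^1$ CohFT, and translating the CohFT correlators into the stationary descendant invariants with the specific $x_i$-expansion stated in the theorem.

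First I would locate the two simple zeros of $dx$ at $z=\pm\sqrt{q}$, whose critical values $\pm 2\sqrt{q}$ equal the canonical coordinates of the quantum cohomology of $\bp^1$ (which is semi-simple for $q\neq 0$). Second, I would compute the Givental R-matrix of $\bp^1$ from its quantum differential equation and match it with the R-matrix read off from the local expansion of $y\,dx$ around each branch point; this matching is the core of \cite{DOSSIde}. Third, I would convert the ancestor potential produced by this R-matrix into a descendant potential, restrict to the stationary sector (all $\alpha_i=\omega$), and check that the Laplace transform of $\psi_i^{b_i}$ at the marked point produces exactly the prefactor $(b_i+1)!\,x_i^{-b_i-2}\,dx_i$. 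The unstable corrections $-\delta_{g0}\delta_{n1}\ln x_1\,dx_1$ and $\delta_{g0}\delta_{n2}\,dx_1\,dx_2/(x_1-x_2)^2$ arise respectively as a primitive of $-y\,dx$ (with the logarithm coming from $y=\ln z$ integrated in $x$) and as the Bergman kernel, i.e.\ the prescribed base cases of topological recursion.

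A complementary check is to run the quantum curve route: starting from Theorem~\ref{th:QGW}, WKB-expand $\psi_1$ to extract $S_k(x)$, and verify that these assemble into $\sum_{2g-1+n=k}\cf^g_n(t,p,\ldots,p)$ for the symmetric primitives $\cf^g_n$ of Proposition~\ref{th:famqc}. The $t$-dependence should match exactly, since on the Gromov-Witten side the puncture operator $\tau_0(1)$ implements the string-equation-driven deformation built into \eqref{Ft}; this is the route pursued in \cite{DMNPSQua} and refined to the present statement in \cite{NScGro}.

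The main obstacle is that $y=\ln z$ is transcendental and multivalued, so $C$ is not a compact algebraic spectral curve in the strict sense required in Section~\ref{sec:EO}. Fortunately, topological recursion only uses local information around the zeros of $dx$, where a single-valued branch of $\ln z$ exists, so the kernel $K$ and the recursion \eqref{EOrec} are well-defined and produce meromorphic $\omega^g_n$. The more delicate issue is the identification of $x_i\to\infty$: the map $z\mapsto z+q/z$ has two preimages of infinity (at $z=0$ and $z=\infty$), and one must fix a specific branch at each marked point together with its relation to the canonical-coordinate frame in Givental's formalism. Tracking this consistently with the chosen primitive, so that the unstable pieces appear in precisely the form stated, is where most of the technical effort goes in either proof.
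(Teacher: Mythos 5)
Your proposal is essentially correct and follows the same route as the sources the paper cites for this result: the paper itself gives no proof of Theorem~\ref{th:GWEO}, quoting it from \cite{DOSSIde} and \cite{NScGro}, and your outline (semi-simplicity of the $\bp^1$ CohFT, matching the Givental $R$-matrix with the local expansion of $y\,dx$ at the branch points $z=\pm\sqrt{q}$, Laplace transform yielding the $(b_i+1)!\,x_i^{-b_i-2}\,dx_i$ factors, and the unstable terms as the base cases of the recursion) is precisely the strategy of those references. Your closing caveats about the non-algebraic $y=\ln z$ and the choice of branch at $x_i=\infty$ correctly identify where the technical care is needed and are consistent with the paper's parenthetical qualification about analytic expansions around a branch of $\{x_i=\infty\}$.
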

\begin{remark} The non-stationary Gromov-Witten invariants of $\bp^1$ are also contained inside the $\omega^g_n(C)$ in the form of ancestor invariants.  See \cite{DOSSIde} for details.
\end{remark}
Indeed, Theorems~\ref{th:QGW} and \ref{th:GWEO} confirm the conjectural form \eqref{exactSk} that $S_k(p)=F^g_n(p,p,...,p)$ for $F^g_n(p_1,...,p_n)$ a primitive of $\omega^g_n$ of the spectral curve \eqref{specGW}.

\subsubsection{Relation to Toda equation}
It is useful to replace the derivatives in $x$ with derivatives in the parameter $t$ using the wave operator
$$D:=\hbar\frac{d}{d x}+\frac{\partial}{\partial t}$$
and the equation
$$D\psi_1=0$$
which follows from the string equation (satisfied quite generally by Gromov-Witten invariants) restricted to the specialisation of the partition function.  The quantum curve becomes
\begin{equation}  \label{QDEt}
\psi(t-1)+\frac{q}{x^2}\psi(t+1)-\Big(1-(t-\frac{1}{2})\frac{\hbar}{x}\Big)\psi(t)=0
\end{equation}
where $\psi(t)$ is defined via \eqref{unstwave} and we suppress all arguments except for $t$, so for example we write $\psi(t-1)$ for $\psi(x,\hbar,q,t-1)$.

The partition function satisfies the Toda lattice equation 
\[\frac{Z(s_0-1)Z(s_0+1)}{Z(s_0)^2}=\frac{1}{q}\frac{\partial^2}{\partial t_0^2}\log Z(s_0)\]
which was conjectured by Eguchi-Yang \cite{EYaTop} and proven by Okounkov-Pandharipande \cite{OPaGro} for $s_0=n\in\bz$ and Dubrovin-Zhang \cite{DZhVir} for $s_0\in\br$.   Using the divisor equation
$$\frac{\partial F}{\partial t_0}=\frac{1}{2}s_0^2+q\frac{\partial F}{\partial q}$$
one can restrict the Toda lattice equation to the specialisation to get:
\[\frac{\psi(t-1)\psi(t+1)}{\psi(t)^2}=\h^2\frac{d}{dq}\left(q\frac{d}{dq}\log \psi\right).\]
The Toda lattice equation does not uniquely determine the partition function $Z$, nor does its specialisation determine the wave function $\psi$.  Together with $\psi(x,q=0,\hbar,t)$ the Toda lattice equation
does determine $\psi(x,q,\hbar,t)$.

Write \eqref{QDEt} as $\cl\psi=0$, so
$$\cl=e^{-\hbar\frac{\partial}{\partial t}}+\frac{q}{x^2}e^{\hbar\frac{\partial}{\partial t}}-(1-(t-\frac{1}{2})\frac{\hbar}{x}).$$
Then
$$(D-(t-1)\frac{\hbar}{x})\circ\cl=\cl\circ(D-t\frac{\hbar}{x})$$
and $\psi$ is characterised as the solution of \eqref{QDEt} that is also an eigenfunction of $D$:
$$D\psi(x,\hbar,q,t)=t\frac{\hbar}{x}\psi(x,\hbar,q,t).$$
Similarly, $D$ is compatible with the Toda lattice equation so it admits an eigenfunction solution if the degree 0 part is an eigenfunction of $\cl$.  Note that there are solutions of Toda that are not solutions of \eqref{QDEt}  and vice versa.

The {\em degree zero} Gromov-Witten invariants contrast nicely the difference between the Toda lattice equation and the quantum curve.  Put $\psi=\sum_{d\geq 0}q^d\psi_d=\psi_0+q\psi_1+...$

\underline{Toda equation}$|_{q=0}$:\quad $\displaystyle\frac{\psi_0(t+1)\psi_0(t-1)}{\psi_0(t)^2}=\hbar^2\psi_1(t)$ has nothing to say about $\psi_0$.

\underline{Quantum curve}$|_{q=0}$: $\psi_0(t-1)=(1-(t-\frac{1}{2})\frac{\hbar}{x})\psi_0(t)=0$ uniquely determines $\psi_0(x,\hbar,t)$ (which is also an eigenfunction of $D$).  An exact formula \cite{OPaGro,FPaHod} is given by $\psi_0=\exp{F_0}$ where
$$F_0=\phi(x-\hbar t) +\frac{1}{\hbar}(x-\hbar t)\ln\left(1-t\frac{\hbar}{x}\right)+t,\quad \phi(x)=\sum_{g=1}^{\infty}(1-2^{1-2g})\frac{\zeta(1-2g)}{2g-1}\left(\frac{\hbar}{x}\right)^{2g-1}.
$$
Finally, the quantum curve implies rather nice rational behaviour of the wave function.  Applying the quantum curve equation \eqref{QDEt} to $\psi=\sum_{d\geq 0}q^d\psi_d=\psi_0\left(1+q\frac{\psi_1}{\psi_0}+q^2\frac{\psi_2}{\psi_0}+...\right)$, one finds that $\frac{\psi_d}{\psi_0}$ is
{\em rational} in $x$ and with $d$ simple poles.  For example,
$$\frac{\psi_1}{\psi_0}(x,\hbar,t)=1+\frac{1}{\frac{x}{h}-t-\frac{1}{2}},\ 
\frac{\psi_2}{\psi_0}(x,\hbar,t)=\frac{1}{2}+\frac{\frac{1}{2}}{\frac{x}{h}-t-\frac{1}{2}}+\frac{\frac{1}{2}}{\frac{x}{h}-t-\frac{3}{2}}.
$$
Put $w=\frac{x}{\hbar}-t$ and
$$\frac{\psi_d}{\psi_0}=r_d(w)=a_{0,d}+\frac{a_{1,d}}{w-\tfrac{1}{2}}+\frac{a_{2,d}}{w-\tfrac{3}{2}}+...+\frac{a_{d,d}}{w-d+\tfrac{1}{2}}$$
then the residues satisfy the linear system: 
\begin{align*}
a_{i,d}&=a_{i+1,d}+\frac{a_{i-1,d-1}}{i(i-1)}\quad (a_{d+1,d}=0)\\
a_{1,d}&=a_{2,d}+r_{d-1}(-\tfrac{1}{2}).
\end{align*}

\subsection{Belyi maps}

The fundamental example of the spectral curve $y^2-xy+1=0$ corresponds to the enumeration of Belyi maps.  We present two proofs that its quantum curve satisfies \eqref{exactSk}.  The first proof uses the $\h$ expansion of (the log of) $\psi(p,\h)$ while the second proof uses the expansion of $\psi(p,\h)$ around $x=\infty$.  The first proof uses topological recursion and thus in a sense explains why the conjecture is  true.  The second proof uses the underlying enumerative problem and highlights a connection to hypergeometric functions.

Let $\cb_{g,n}(\mu_1, \ldots, \mu_n)$ be the set of all connected genus $g$ {\em Belyi} maps---meaning branched covers $\pi:\Sigma\to\bp^1$ unramified over $\bp^1\backslash\{0,1,\infty\}$---with all points over 1 having ramification 2 and ramification divisor over $\infty$ given by $\pi^{-1}(\infty)=\mu_1p_1+ \cdots + \mu_np_n$, where the points over $\infty$ are labelled $p_1, \ldots, p_n$.  Two Belyi maps $\pi_1:\Sigma_1\to\bp^1$ and $\pi_2:\Sigma_2\to\bp^1$ are isomorphic if there exists a homeomorphism $f:\Sigma_1\to\Sigma_2$ that covers the identity on $\bp^1$ and preserves the labelling over $\infty$. 
\begin{definition} 
For any $\mmu=(\mu_1,\ldots,\mu_n)\in\bz_+^n$, define
\[
M_{g,n}(\mu_1,\ldots,\mu_n)=\sum_{\pi\in\cb_{g,n}(\mmu)}\frac{1}{|{\rm Aut\ }\pi|},
\]
where ${\rm Aut\ }\pi$ denotes the automorphism group of the branched cover $\pi$.
\end{definition}
Define
\begin{equation}  \label{genbelyi}
F^g_n(x_1,...,x_n)=(-1)^n\sum_{\mmu> 0}M_{g,n}(\mu_1, \ldots, \mu_n)x_1^{-\mu_1} \ldots x_n^{-\mu_n}+\delta_{g0}\delta_{n1}\log x.
\end{equation}
The exceptional case of $(g,n)=(0,1)$ includes an extra $\log x$ term (to allow for the 0th Catalan number $C_0=1$ missing from $kM_{0,1}(k)=C_k$) so that
$$
\frac{d}{dx}F^0_1(x)=\sum_{k\geq 0}C_kx^{-k-1}=y,\quad x=y+\frac{1}{y}.
$$
This is the same as Example~\ref{ex2}, and gives an analytic expansion of the global meromorphic function $y$ in the local coordinate $x$ on a branch of $x = \infty$, of the Stieltjes transform of a probability measure
$$y=\hat{\rho}(x)=\frac{1}{2\pi}\int^2_{-2}\frac{\sqrt{4-t^2}}{x-t}dt.$$
The connection between the two goes deeper.  The probability measure is the Wigner semicircle distribution of eigenvalues of Hermitian matrices with the Gaussian potential.  Associated to any Belyi map $f\in\cb_{g,n}(\mu_1, \ldots, \mu_n)$ is a fatgraph given by the pull-back of the unit interval $\Gamma=f^{-1}([0,1])$, and these graphs arise when calculating Hermitian matrix integrals.  The {\em spectral curve} of this matrix model is
\begin{equation}  \label{specbelyi}
y^2-xy+1=0.
\end{equation}
It was proven in \cite{EOrTop} that Tutte's equations for discrete surfaces---the fatgraphs $\Gamma=f^{-1}([0,1])$---correspond to the loop equations for this matrix model:
\begin{align}  \label{loop}
x_1W_g(x_1,\xx_S) =& W_{g-1}(x_1,x_1,\xx_S)+ \mathop{\sum_{g_1+g_2=g}}_{I \sqcup J = S} W_{g_1}(x_1,\xx_I) \, W_{g_2}(x_1,\xx_J)\\
&\quad+\sum_{j=2}^n \frac{\partial}{\partial x_j}\frac{W_g(x_1,\xx_{S\setminus\{j\}})-W_g(\xx_S)}{x_1-x_j}+\delta_{g,0} \, \delta_{n,1}
\nonumber 
\end{align}
where $W_g$ is related to the generating function \eqref{genbelyi} via
$$\frac{\partial}{\partial x_1} \cdots \frac{\partial}{\partial x_n} F_{g,n}(x_1, \ldots, x_n)=W_g(x_1, \ldots, x_n).
$$
The solution of the loop equations \eqref{loop} for $(g,n)=(0,1)$ defines the spectral curve \eqref{specbelyi}.
A consequence is that topological recursion applied to the spectral curve \eqref{specbelyi} yields:
\begin{equation}  \label{TRbelyi}
\omega^g_n(p_1, \ldots, p_n) = W_g(x_1, \ldots, x_n)\, \dd x_1 \otimes \cdots \otimes \dd x_n-\delta_{g0}\delta_{n1}\frac{dx_1}{x_1}+\delta_{g0}\delta_{n2}\frac{dx_1\otimes dx_2}{(x_1-x_2)^2}
\end{equation}
for $p_i\in C$, the spectral curve, $x_i=x(p_i)$ and equality denotes an analytic expansion in the local coordinate $x$ on a branch of $x = \infty$.  We see that the unstable cases $(g,n)=(0,1)$ and $(0,2)$ require minor adjustments, as in the example of Gromov-Witten invariants of $\bp^1$.  

Consider the following wave function constructed out of $F^g_n(x_1,...,x_n)$ defined in \eqref{genbelyi} for $x_i=x(y_i)=y_i+\frac{1}{y_i}$ and specialised to $y_i=y$.
\begin{equation}  \label{wavefun} 
\psi(y,\h)=\exp\Big(\h^{-1}\sum_{k\geq 0} \h^kS_k(y)\Big),\quad S_k(y)=\sum_{2g-1+n=k}\frac{1}{n!}F^g_n(x(y),x(y),\ldots,x(y))
\end{equation}
where $x(y)=y+\frac{1}{y}$.
The formulae \eqref{TRbelyi} show that $\psi(y,\h)$ is the the conjectural wave function \eqref{exactSk} for the quantum curve of the spectral curve \eqref{specbelyi}.  The following theorem confirms the conjectural form.  
\begin{theorem}\cite{GSuApo,MSuSpe}  \label{thm:qcurve}
\begin{equation}  \label{qde}
\left(\h^2\frac{d^2}{dx^2}-x\h\frac{d}{dx}+1\right)\psi(y,\h)=0
\end{equation}
for $y^2-xy+1=0$ and $\psi(y,\h)$ defined in \eqref{wavefun}.
\end{theorem}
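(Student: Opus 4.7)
The plan is to reduce Theorem \ref{thm:qcurve} to the loop equations \eqref{loop}, which for the Belyi spectral curve \eqref{specbelyi} are equivalent to topological recursion. Substituting the WKB ansatz $\psi=\exp(\h^{-1}\sum_{k\ge 0}\h^k S_k)$ into $\h^2\partial_x^2-x\h\partial_x+1$ and dividing by $\psi$ gives a formal $\h$-series. At order $\h^0$ one obtains $(S_0')^2-xS_0'+1=y^2-xy+1$, which vanishes on the spectral curve. For each $m\ge 1$ the coefficient of $\h^m$ is
$$S_{m-1}'' + \sum_{\substack{i+j=m\\ i,j\ge 1}} S_i'\,S_j' + (2S_0'-x)\,S_m' = 0,$$
so the theorem reduces to verifying these triangular identities for the $S_k$ of \eqref{wavefun}.

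Next I would rewrite each identity in terms of the $F^g_n$. Grouping $S_k=\sum_{2g-1+n=k}\frac{1}{n!}F^g_n(x,\dots,x)$ by Euler characteristic and using the chain rule $\frac{d}{dx}F^g_n(x,\dots,x)=\sum_{j=1}^n \partial_j F^g_n\bigl|_{\text{diag}}$ together with $\partial_1\cdots\partial_n F^g_n=W_g$, each of the three pieces $S_{m-1}''$, $S_i'S_j'$, $(2S_0'-x)S_m'$ becomes, after tracking the $1/n!$ weights, a weighted sum of $W_g$'s and products $W_{g_1}W_{g_2}$ evaluated at the diagonal $x_1=\cdots=x_n=x$.

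The diagonal identities obtained this way are then matched directly with \eqref{loop}: multiply \eqref{loop} by suitable combinatorial weights, sum over $(g,n)$ with $2g-2+n$ fixed, and specialise. The prefactor $x_1$ on the left of \eqref{loop} supplies the $-x\h\partial_x$ piece of the quantum curve; the bilinear sum $W_{g_1}W_{g_2}$ supplies $\sum S_i'S_j'$; the genus-reducing term $W_{g-1}(x_1,x_1,\xx_S)$ supplies $S_{m-1}''$; and the derivative term $\partial_j[(W_g(x_1,\xx_{S\setminus\{j\}})-W_g(\xx_S))/(x_1-x_j)]$ degenerates on the diagonal by L'H\^opital's rule to further second-derivative contributions. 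The $\delta_{g,0}\delta_{n,1}$ boundary term in \eqref{loop} combines with the $\log x$ regularisation in \eqref{genbelyi} of $F^0_1$ to produce the constant $+1$ in the quantum curve operator.

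The main obstacle is the combinatorial bookkeeping in the matching step, especially handling the unstable cases $(g,n)\in\{(0,1),(0,2)\}$ whose diagonal specialisations must conspire to reproduce the factor $(2S_0'-x)S_m'$ cleanly and to account for the $\omega^0_2$ regularisation in \eqref{TRbelyi}. As a sanity check I would run the alternative $1/x$-expansion approach in parallel: write $\psi=x^{1/\h}\tilde\psi(x^{-1},\h)$, whose coefficients are explicit weighted sums of Belyi counts $M_{g,n}(\mmu)$, and convert \eqref{qde} into a three-term linear recursion on these coefficients, then verify it against Tutte's edge-removal recursion for Belyi maps. This corresponds to the second proof type mentioned before the theorem and exploits the hypergeometric structure of the $M_{g,n}(\mmu)$ directly rather than going through topological recursion.
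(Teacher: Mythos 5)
Your proposal follows essentially the same route as the paper's first proof: substitute the WKB ansatz to obtain the triangular system in the $S_k'$, integrate the loop equations \eqref{loop} in $x_2,\dots,x_n$, specialise to the diagonal with the $1/n!$ weights, and sum over $(g,n)$ with $2g-1+n$ fixed so that the quadratic, genus-reducing, and derivative terms reassemble into the coefficient of $\h^k$ in the quantum curve equation. The parallel $x^{-1}$-expansion check you mention is likewise the paper's second proof, so nothing essential is missing.
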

\begin{remark}
This theorem was known in the physics literature, see for example \cite{GSuApo}.  A rigorous proof using topological recursion was given in \cite{MSuSpe}.  A simpler more direct combinatorial proof was given in \cite{DMaQua}.  
\end{remark}

\begin{proof}

{\bf Proof 1---expansion in $\h^k$}  following \cite{MSuSpe}.

The unstable cases have been proven, since for $(g,n)=(0,1)$, $\frac{d}{dx}S_0(z)=z$ is shown above and for $(g,n)=(0,2)$, it follows from the relation
$$d_1d_2F^0_2(x(z_1),x(z_2))=\frac{dz_1dz_2}{(z_1-z_2)^2}-\frac{dx(z_1)dx(z_2)}{(x(z_1)-x(z_2))^2}=\frac{dz_1dz_2}{(1-z_1z_2)^2}\Rightarrow S_1(z)=-\frac{1}{2}\log(1-z^2)$$
which agrees with $S_1$ calculated via the WKB method in \eqref{ex2s1} with $y=1/z$.


For $\log\psi(z,\h)=\h^{-1}\sum_{k\geq 0} \h^kS_k(z)$ \eqref{qde} becomes
$$
\h^2\left(\left(\frac{d}{dx}\right)^2\log\psi(z,\h)+\left(\frac{d}{dx}\log\psi(z,\h)\right)^2\right)-x\h\frac{d}{dx}\log\psi(z,\h)+1=0
$$
which is equivalent to
\begin{equation}\label{qloop}
\sum_{k=0}^{\infty}\left(\frac{d}{dx}\right)^2S_k(z)\h^{k+1}+\left(\sum_{k=0}^{\infty}\frac{d}{dx}S_k(z)\h^k\right)^2-x\sum_{k=0}^{\infty}\frac{d}{dx}S_k(z)\h^k+1=0.
\end{equation}
Note that if we set $\h=0$ in \eqref{qloop} we get $\left(\frac{d}{dx}S_0(z)\right)^2-x\frac{d}{dx}S_0(z)+1=0$ which is exactly the first of the loop equations \eqref{loop}.
The essential idea of the proof is that the coefficients of $\h^k$ for $k>0$ have the quadratic form of the topological recursion.  

Integrate \eqref{loop} with respect to $x_2,...,x_n$ (when $n>1$):
\begin{align}  \label{iloop}
x_1\frac{d}{dx_1}F^g_n(x_1,\xx_S) =& \left.\frac{\partial^2}{\partial u_1\partial u_2}F^{g-1}_{n+1}(u_1,u_2,\xx_S)\right|_{u_1=u_2=x_1}\hspace{-1mm}+\hspace{-1mm} \mathop{\sum_{g_1+g_2=g}}_{I \sqcup J = S} \frac{d}{dx_1}F^{g_1}_{|I|+1}(x_1,\xx_I) \, \frac{d}{dx_1}F^{g_2}_{|J|+1}(x_1,\xx_J)\\
&+\sum_{j=2}^n \frac{\frac{d}{dx_1}F^g_{n-1}(x_1,\xx_{S\setminus\{j\}})-\frac{d}{dx_j}F^g_{n-1}(\xx_S)}{x_1-x_j}.
\nonumber 
\end{align}
Each term of \eqref{iloop} vanishes at $x_j=\infty$  for any $j=2,...,n$ which determines the constants of integration.

Specialise \eqref{iloop} to $x_i=x$
\begin{align}  \label{sloop}
\frac{1}{n}x\frac{d}{dx}F^g_n(x,\dots,x) =& \mathop{\sum_{g_1+g_2=g}} \binom{n-1}{|I|}\frac{1}{|I|+1}\frac{d}{dx}F^{g_1}_{|I|+1}(x,\dots,x) \, \frac{1}{|J|+1}\frac{d}{dx}F^{g_2}_{|J|+1}(x,\dots,x)\\
+\frac{1}{n(n+1)}\frac{d^2}{dx^2}&F^{g-1}_{n+1}(x,\dots,x)-\frac{1}{n}\left.\frac{d^2}{du^2}F^{g-1}_{n+1}(u,x,\dots,x)\right|_{u=x}
\hspace{-5mm}+(n-1)\left.\frac{d^2}{du^2}F^g_{n-1}(u,x,\dots,x)\right|_{u=x}
\nonumber 
\end{align}
which uses the elementary relations on symmetric polynomials $\left.\frac{d}{dx_1}F^g_n(x_1,x,\dots,x)\right|_{x_1=x}=\frac{1}{n}\frac{d}{dx}F^g_n(x,\dots,x)$ and $\left.\frac{\partial^2}{\partial u_1\partial u_2}F^{g-1}_{n+1}(u_1,u_2,x,\dots,x)\right|_{u_i=x}\hspace{-2mm}=\frac{1}{n(n+1)}\frac{d^2}{dx^2}F^{g-1}_{n+1}(x,\dots,x)-\frac{1}{n}\left.\frac{d^2}{du^2}F^{g-1}_{n+1}(u,x,\dots,x)\right|_{u=x}$ and the fact that the limit $x_1\to x_j$ in the last term defines the derivative.  See Appendix A in \cite{MSuSpe}.

Recall that $\displaystyle S_k(z)=\sum_{2g-1+n=k}\frac{1}{n!}F^g_n(x(z),x(z),\ldots,x(z))$.  So we multiply \eqref{sloop} by $\frac{1}{(n-1)!}$ and sum over all $(g,n)$ such that $2g-1+n=k$.  The second derivatives in $u$ cancel and \eqref{sloop} becomes
$$
x\frac{d}{dx}S_k(z) = \sum_{i+j=k} \frac{d}{dx}S_i(z) \frac{d}{dx}S_j(z)+\frac{d^2}{dx^2}S_{k-1}(z)
$$
which is the coefficient of $\h^k$ in \eqref{qloop} for $k>0$.  Hence the theorem is proven.\\

{\bf Proof 2---expansion in $x^{-1}$} following \cite{DMaQua}.  First express the wave function defined in \eqref{wavefun} using \eqref{genbelyi} as
\[
\psi(p, \h) = x^{1/\h} \, \opsi(x, \h).
\]
  We write $x=x(p)$ in the argument because we will work with expansions around $x=\infty$.  
The main idea is to prove the following exact formula.  We will see later its relation to Hermite polynomials.
\begin{equation}  \label{xexpansion}
\opsi(x, \h) = 1 + \sum_{e=1}^\infty \frac{ (-1)^{e}\h^e}{2^ee!}  \h^{-1} (\h^{-1} - 1) (\h^{-1} - 2) \cdots (\h^{-1} - 2e+1) x^{-2e}.
\end{equation}
To prove the wave equation \eqref{qde} we simply apply the differential operator directly to the formula \eqref{xexpansion}.  Actually, we would like to work with $\opsi(x, \h)$ which has an expansion in $x^{-1}$ with coefficients that are Laurent polynomials in $\h$~---~in other words, $\opsi(x, \h) \in \mathbb{Q}[\h^{\pm 1}][[x^{-1}]]$.  Thus we first conjugate \eqref{qde} by $x^{1/\h}$ to get
\begin{equation} \label{abc}
\left[ \h^2 \frac{d^2}{d x^2} + \h \left((\frac{2}{x}-x)\frac{d}{d x}-\frac{1}{x^2}\right)  +\frac{1}{x^2}\right] \opsi(x, \h) = 0.
\end{equation}
which follows immediately by direct application to the formula \eqref{xexpansion}. \\
\\
The remainder of the proof is devoted to proving \eqref{xexpansion}.
First, consider the logarithm of the modified wave function.
\begin{align*}
\log \opsi(x, \h) &= \sum_{g=0}^\infty \sum_{n=1}^\infty \frac{\h^{2g-2+n}}{n!} \, F_{g,n}(x, x, \ldots, x) 
= \sum_{g=0}^\infty \sum_{n=1}^\infty \frac{\h^{2g-2+n}}{n!} \hspace{-4mm}\sum_{\mu_1, \ldots, \mu_n = 1}^\infty\hspace{-4mm} (-1)^nM_{g,n}(\mu_1, \ldots, \mu_n) \, x^{-\sum\mu_i } \\
&= \sum_{v=1}^\infty \sum_{e=1}^\infty (-1)^{e-v}f(v,e) \, \h^{e-v} x^{-2e}
\end{align*}
Here, $f(v, e)$ denotes the weighted count of connected dessins with $v$ vertices and $e$ edges. To obtain this last expression, we have used the fact that $e - v = 2g - 2 + n$ and $\mu_1 + \cdots + \mu_n = 2e$ for any dessin. The factor $\frac{1}{n!}$ accounts for the fact that we are now considering dessins with unlabelled faces. The weight of a dessin is the reciprocal of its number of automorphisms.  

Let $f^\bullet(v, e)$ denote the weighted count of possibly disconnected dessins with $v$ vertices and $e$ edges. Then
\[
\opsi(x, \h) = 1 + \sum_{v=1}^\infty \sum_{e=1}^\infty (-1)^{e-v} f^\bullet(v, e) \, \h^{e-v} x^{-2e}.
\]
Now $f^\bullet(v, e)$ is equal to $\frac{1}{(2e)!}$ multiplied by the number of triples $(\sigma_0, \sigma_1, \sigma_2)$ of permutations in the symmetric group $S_{2e}$ such that $\sigma_0 \sigma_1 \sigma_2 = \text{id}$, $\sigma_0$ has $v$ disjoint cycles and $\sigma_1$ has cycle type $2^e$.  Clearly we need to sum only over pairs $(\sigma_0, \sigma_1)$. Recall that the Stirling number of the first kind $\stirling{n}{k}$ counts the number of permutations in $S_n$ with $k$ disjoint cycles.  Since $(2e-1)!!$ is the number of permutations in $S_{e}$ of cycle type $2^e$ we have
\[
f^\bullet(v, e) = \frac{1}{(2e)!} \stirling{2e}{v} (2e-1)!!= \frac{1}{2^ee!} \stirling{2e}{v} 
\]
Note that $\opsi(x, \h)\in\mathbb{Q}[\h^{\pm 1}][[x^{-1}]]$ since for fixed $e$ we require $2 \leq v \leq 2e$ to have $f^\bullet(v, e) \neq 0$.

Now we simply use the fact that the generating function for Stirling numbers of the first kind is given by
\[
\sum_{k=1}^n \stirling{n}{k} x^k = x (x+1) (x+2) \cdots (x+n-1).
\]
Use this in the expression for the modified wave function as follows.
\begin{align*}
\opsi(x, \h) &= 1 + \sum_{v=1}^\infty \sum_{e=1}^\infty \frac{ (-1)^{e-v}}{2^ee!}  \stirling{2e}{v} \h^{e-v} x^{-2e} 
= 1 + \sum_{e=1}^\infty   (-1)^{e}\h^ex^{-2e}\frac{1}{2^ee!}\sum_{v=1}^\infty   \stirling{2e}{v} (-\h)^{-v} \\
&= 1 + \sum_{e=1}^\infty \frac{ (-1)^{e}\h^e}{2^ee!}  \h^{-1} (\h^{-1} - 1) (\h^{-1} - 2) \cdots (\h^{-1} - 2e+1) x^{-2e} 
\end{align*}
which proves \eqref{xexpansion} as required.
\end{proof}

\begin{remark}
The {\em Hermite polynomials} are defined by
\begin{equation}  \label{hermite}
H_N(x):=(-1)^Ne^{x^2}\frac{d^N}{dx^N}e^{-x^2}=\sum_{k=0}^{\lfloor\frac{N}{2}\rfloor}(-1)^k\binom{N}{2k}(2k-1)!!2^{N-k}x^{N-2k}
\end{equation}
So $H_N(x)$ is a degree $N$ polynomial in $x$, for example $H_0(x)=1$, $H_1(x)=2x$ and $H_2(x)=4x^2-2$.  They satisfy the Hermite equations
$$
\Big[\left(\frac{d}{dx}\right)^2-2x\frac{d}{dx}+2N\Big]H_N(x)=0.
$$

For any function $f:H_N\to\bc$ on $N\times N$ Hermitian matrices, define 
$$\langle f\rangle_N:=\frac{1}{Z_N}\int_{H_N}f(A)\exp\Big[-\frac{N}{2}\tr A^2\Big]dA,\quad Z_N=\int_{H_N}\exp\Big[-\frac{N}{2}\tr A^2\Big]dA$$
so in particular $\langle A_{ij}\rangle=0$ and $\langle |A_{ij}|^2\rangle=1/N$.

Then it is well-known that
\begin{equation}  \label{<det>}
\langle\det(x I-A)\rangle_N=(2N)^{-N/2}H_N(x\sqrt{\frac{N}{2}}).
\end{equation}
The following proof of \eqref{<det>} comes from \cite{FGaCou}.
\begin{align*}
\langle\det(x I-A)\rangle_N&=\langle\sum_{P\in S_N}\epsilon(P)\prod_{i=1}^N(x\delta_{i,P(i)}-A_{i,P(i)})\rangle_N=\langle\sum_{P\in S_N^{(2)}}\epsilon(P)\prod_{i=1}^N(x\delta_{i,P(i)}-A_{i,P(i)})\rangle_N\\
&=\sum_{k=0}^{\lfloor\frac{N}{2}\rfloor}(-1)^k\binom{N}{2k}(2k-1)!!x^{N-2k}N^{-k}=(2N)^{-N/2}H_N(x\sqrt{\frac{N}{2}})
\end{align*}
where the second equality uses the fact that the only non-zero contributions to the integral come from permutations with no cycles of length greater than 2, denoted by $S_N^{(2)}\subset S_N$.  For the third equality, each $P\in S_N^{(2)}$ is a product of say $N-2k$ fixed points and $k$ 2-cycles which each contribute a factor of $x$, respectively $1/N$. The factor $(-1)^k$ comes from the parity of $P$ and the factor $\binom{N}{2k}(2k-1)!!$ is the number of ways of choosing $N-2k$ fixed points and $k$ 2-cycles.  The final equality uses the formula \eqref{hermite} for the scaled Hermite polynomial.  One consequence of \eqref{<det>} is that 
$\langle\det(x I-A)\rangle_N$ satisfies the (scaled) Hermite equation since
$$\Big[\left(\frac{1}{N}\frac{d}{dx}\right)^2-x\frac{1}{N}\frac{d}{dx}+1\Big]H_N(x\sqrt{\frac{N}{2}})=0.$$
Moreover, for $\h=1/N$
$$\langle\det(x-A)\rangle_N=\psi(x,\h)$$
where the left hand side is considered as a function of $N$.  This is obtained by comparing the exact expression \eqref{xexpansion} for $\opsi(x,\h)$ with that for $\langle\det(I-Ax^{-1})\rangle_N$.  This identification also gives a proof of the well-known semi-classical limit:
$$\lim_{N\to\infty}\frac{1}{N}\frac{d}{dx}\log\langle\det(x I-A)\rangle=\frac{1}{2\pi}\int^2_{-2}\frac{\sqrt{4-t^2}}{x-t}dt,\quad x\not\in[-2,2].
$$

\begin{remark}
Another proof of Theorem~\ref{thm:qcurve} appears in \cite{ACMPMod}.  It starts from the fact that \eqref{genbelyi} is related to correlation functions of a matrix model, known as the Kontsevich-Penner matrix model \cite{CMaMul} which is an integral over $N\times N$ Hermitian matrices with a potential that in some sense generalises the integral representation of Hermite polynomials.  The wave function defined in \eqref{wavefun} arises as the specialisation of the partition function of the Kontsevich-Penner matrix model for $1\times 1$ matrices!  The proof that it satisfies the differential equation \eqref{qde} is immediate because the integral is simply one dimensional.
\end{remark}

\end{remark}

\end{document}